\documentclass{llncs}
\usepackage{graphicx}
\usepackage{amssymb}
\usepackage{amsmath}
\usepackage{color}
\usepackage{times}
\usepackage{verbatim}
\usepackage{url}
\usepackage{tikz}
\usetikzlibrary{chains, shapes.misc}
\tikzset{
    nonterminal/.style={
      rectangle,
      minimum size=6mm,
      very thick,
      draw=red!50!black!50,
      top color=white,
      bottom color=red!50!black!20,
      font=\itshape},
    terminal/.style={
      rounded rectangle,
      minimum size=6mm,
      very thick,draw=black!50,
      top color=white,bottom color=black!20,
      font=\ttfamily}
    }
\usepackage{wrapfig}
\usetikzlibrary{arrows,decorations.pathmorphing,backgrounds,positioning,fit,petri}

\usepackage{subcaption}
\captionsetup{compatibility=false}
\usepackage{subfig}

\def\C{\mathcal{C}}	
\newcommand{\bigO}{\ensuremath{\mathcal{O}}} 

\def\I{\mathcal{I}}

\def\uri{\mathbf{U}}
\def\bn{\mathbf{B}}
\def\lit{\mathbf{L}}
\def\const{\mathbf{C}}
\def\var{\mathbf{V}}

\def\local{\textrm{local}}
\def\lclosure{\textsf{lclosure}}

\title{ Query Answering over Contextualized RDF Knowledge with Forall-Existential Bridge Rules: 
Attaining Decidability using Acyclicity (full version)
}
\author{
Mathew Joseph$^{1,2}$  \and Gabriel Kuper$^2$ \and Luciano Serafini$^1$
}
\institute{ 
$^1$ DKM, FBK-IRST, Trento, Italy \\
$^2$ DISI, University Of Trento, Trento, Italy\\
\medskip
\email{\{mathew,serafini\}@fbk.eu, kuper@disi.unitn.it}
}

\begin{document}
\maketitle
\begin{abstract}
The recent outburst of context-dependent knowledge on the Semantic Web (SW) has led to the 
realization  of the importance of the quads in the SW community. Quads, which extend a
standard RDF triple, by adding a new parameter of the `context' of an RDF triple, 
thus informs a reasoner to distinguish between the knowledge in various 
contexts. Although this distinction separates
 the triples in an RDF graph into various contexts, and allows the reasoning to be decoupled across various contexts,
  bridge rules need to be provided for inter-operating the knowledge across these contexts.
  We call a set of quads together with the bridge rules, a quad-system. 
In this paper, we discuss the problem of query answering 
over quad-systems with expressive forall-existential bridge rules. 
It turns out the query answering over quad-systems is undecidable, in general. 
We derive a decidable class of quad-systems,
namely \emph{context-acyclic} quad-systems, for which query answering can be done 
using forward chaining. Tight bounds for data and combined complexity of query entailment has been established 
for the derived class.  
\end{abstract}

\begin{keywords}
Contextualized RDF/OWL knowledge, Contextualized Query Answering, Quads, Forall-Existential Rules,
 Semantic Web, Knowledge Representation.
\end{keywords}

\vspace{-5pt}
\section{Introduction}
One of the major recent changes in the SW community is the transformation from a \emph{triple} 
 to a \emph{quad} as its primary knowledge carrier. As a consequence, more and more triple stores are becoming \emph{quad} stores.
Some of the popular quad-stores are 4store\footnote{http://4store.org}, Openlink Virtuoso\footnote{http://virtuoso.openlinksw.com/rdf-quad-store/}, and some
of the current popular triple stores like Sesame\footnote{http://www.openrdf.org/}
internally keep track of the context by storing arrays of four names $(c,s,p,o)$ (further denoted as $c:(s,p,o)$),
 where $c$ is an identifier that stands for the context of the triple $(s,p,o)$.  Some of the recent initiatives
in this direction have also extended existing formats like N-Triples to N-Quads. 
The latest Billion triples challenge datasets (BTC 2012) have all been released in the N-Quads format.

One of the main benefits of quads over triples are that they allow users to specify various attributes of meta-knowledge 
that further qualify knowledge~\cite{caroll-bizer-named-graphs-data-provenance-wwww2005}, and also allow users to query for this
meta knowledge~\cite{QueryingMetaknowledge}. Examples of these attributes, which are also called 
\emph{context dimensions}~\cite{cyc_context_space},
are provenance, creator, intended user, creation time, validity time, geo-location, and topic. Having defined various contexts
in which triples are dispersed, one can declare in another meta-context $mc$, statements such as 
$mc\colon(c_1$, \text{creator}, \text{John}$)$, $mc\colon(c_1$, \text{expiryTime}, \text{``jun-2013''}$)$ that talk 
about the knowledge in context $c_1$, in this case its creator and expiry time.
Another benefit of such a contextualized approach is that it opens possibilities of interesting ways for
querying a contextualized knowledge base. For instance, if context $c_1$ contains knowledge about Football World Cup 2014
and context $c_2$ about Football Euro Cup 2012. Then the query ``who beat Italy in both Euro Cup 2012 and
World Cup 2014'' can be formalized as the conjunctive query: 
\[
 c_1\text{: }(x,\textrm{beat},\textrm{Italy}) \wedge c_2\text{: }(x,\textrm{beat},\textrm{Italy}), \text{where $x$ is a variable}.
\] 
As the knowledge can 
be separated context wise and simultaneously be fed to separate reasoning engines, 
this approach increases both efficiency and scalability. Besides the above flexibility,  
  \emph{bridge rules}~\cite{DDL} can be provided for inter-interoperating the knowledge in different 
  contexts. Such rules are primarily of the form:
\begin{eqnarray}\label{eqn:bridgeRule}
c:\phi(\vec x) \rightarrow c':\phi'(\vec x) \nonumber
\end{eqnarray}
where $\phi,\phi'$ are both atomic concept (role) symbols, $c,c'$ are contexts. The semantics of such a rule is that if, for any $\vec a$, 
$\phi(\vec a)$ holds in context $c$, then $\phi'(\vec a)$ should hold in context $c'$, 
where $\vec a$ is a unary/binary vector dependending on whether $\phi, \phi'$ are concept/role symbols.
Although such bridge rules serve the purpose of specifying  knowledge interoperability from a source 
context $c$ to a target context $c'$, in many practical situations there is the need of 
interoperating  multiple source contexts with multiple target targets, for which the bridge rules of the form (\ref{eqn:bridgeRule})
is inadequate. Besides, one would also want the ability of creating new values in target contexts for the bridge rules.


In this work, we consider \emph{forall-existential bridge rules} 
that allows conjunctions and existential quantifiers in them, 
and hence is more expressive than those, in DDL~\cite{DDL} and 
McCarthy et al.~\cite{McCarthy95formalizingcontext}. 
A set of quads together with such bridge rules is called a \emph{quad-system}.  
\noindent The main contributions of this work can be summarized as:
\begin{enumerate}
\item We provide a basic semantics for contextual reasoning over quad-systems, and study contextualized 
conjunctive query answering over them. For query answering, we use the notion of a \emph{distributed chase},
which is an extension of a standard \emph{chase}~\cite{JohnsonK84,AbiteboulHV95} that is widely used in databases and 
KR for the same.
 \item We show that conjunctive query answering
 over quad-systems, in general,  is undecidable. 
 We derive a class of quad-systems called \emph{context acyclic} quad-systems, for which query answering is decidable and can be 
 done by forward chaining. We give both data and combined complexity of conjunctive query entailment for the same. 
\end{enumerate}
The paper is structured as follows: In section 2, we formalize the idea of contextualized quad-systems, giving
various definitions and notations for setting the background.
In section 3, we formalize the query answering on quad-systems, define notions such as distributed chase
that is further used for query answering, and give the undecidability results of query entailment for
unrestricted quad-systems. In section \ref{sec:cAcyclic}, we present
context acyclic quad-systems and its properties. We give an account of relevant related works in
section \ref{sec:related work}, and 
conclude in section \ref{sec:conclusion}.


\section{Contextualized Quad-Systems}
 \vspace{-5pt} In this section, we formalize the notion of a
 quad-system and its semantics.  For any vector or sequence $\vec x$, we denote
 by $\|\vec x\|$ the number of symbols in $\vec x$, and by $\{\vec x\}$ the set of symbols
 in $\vec x$. For any sets $A$ and $B$, $A \rightarrow B$ denotes the
 set of all functions from set $A$ to set $B$.
Given  the set of URIs $\uri$, the set of blank nodes $\bn$, and  the set of literals $\lit$, the set 
$\const=\uri \uplus \bn \uplus \lit$ are called the set of (RDF) constants. Any $(s,p,o) \in \const \times 
\const \times \const$ is called a generalized RDF triple (from now on,
just triple). A graph is defined as a set of triples.
A \emph{Quad} is a tuple of the form $c\colon(s,p,o)$, where $(s,p,o)$ is a triple and $c$ is a URI\footnote{Although,
in general a context identifier can be a constant, for the ease of notation, we restrict them to be a URI}, 
called the \emph{context identifier} that denotes the context of the RDF triple.
A \emph{quad-graph} is defined as a set of quads.
For any quad-graph $Q$ and any context identifier $c$, we denote by $graph_Q(c)$ the set $\{(s,p,o)|c\colon(s,p,o) \in Q\}$.
We denote by $Q_{\C}$ the quad-graph whose set of context identifiers is $\C$. 
 Let $\var$ be the set of variables, any element of the set ${\const}^{\var}=\var \cup \const$ is a \emph{term}. 
 Any $(s,p,o) \in {\const}^{\var} \times {\const}^{\var} \times
 {\const}^{\var}$ is called a \emph{triple pattern}, and an expression
 of the form $c\colon(s,p,o)$, where $(s,p,o)$ is a triple pattern,
 $c$ a context identifier, is called a \emph{quad pattern}.  A triple
 pattern $t$, whose variables are elements of the vector $\vec x$ or elements of the vector 
 $\vec y$ is written as $t(\vec x, \vec y)$.  For any function $f\colon A \rightarrow B$, the \emph{restriction} of $f$ to a set $A'$,
is the mapping $f|_{A'}$ from $A' \cap A$ to $B$ s.t. $f|_{A'}(a)=f(a)$, for each $a \in A \cap A'$.
 For any triple pattern $t=(s,p,o)$ and a function $\mu$ from  $\var$ to a set $A$,
 $t[\mu]$ denotes $(\mu'(s), \mu'(p), \mu'(o))$, where $\mu'$ is an extension of $\mu$ to $\const$ s.t. $\mu'|_{\const}$
is the identity function. For any set of  triple patterns $G$, $G[\mu]$ denotes $\bigcup_{t \in G} t[\mu]$.
For any vector of constants $\vec a=\langle a_1,\dots,a_{\|\vec a\|}\rangle$,
 and vector of variables $\vec x$ of the same length, $\vec x/\vec
 a$ is the function $\mu$ s.t. 
 $\mu(x_i)=a_i$, for $1\leq i\leq \|\vec a\|$. We use the notation $t(\vec a, \vec y)$ to
 denote $t(\vec x, \vec y)[\vec x/\vec a]$. 
\vspace{-10pt}
\paragraph{Bridge rules (BRs)} \emph{Bridge rules (BR)} enables knowledge propagation across contexts.
Formally, a BR is an expression of the form:
 \begin{eqnarray}
\label{eqn:intraContextualRuleWithQuantifiers}
  \forall \vec x \forall \vec z \ [c_1\text{: }t_1(\vec x, \vec z) \wedge ... \wedge c_n\text{: }t_n(\vec x, \vec z) 
 \rightarrow \exists \vec y \ c'_1\text{: }t'_1(\vec x, \vec y) \wedge ... \wedge c'_m\text{: }t'_m(\vec x, \vec y)]
\end{eqnarray}  
where $c_1, ..., c_n, c'_1,...,c'_m$ are context identifiers, 
$\vec x, \vec y, \vec z$ are vectors of variables s.t. $\{\vec x\}, \{\vec y\}$,
and $\{\vec z\}$ are pairwise disjoint. $t_1(\vec x$, $\vec z)$, 
..., $t_n(\vec x$, $\vec z)$ are triple patterns which do not contain blank-nodes, and whose set of
variables are from $\vec x$ or $\vec z$. $t'_1(\vec x$, $\vec y)$, ...,$t'_m(\vec x, \vec y)$ are triple patterns, whose set of
variables are from $\vec x$ or $\vec y$, and also does not contain blank-nodes. 
For any BR, $r$, of the form (\ref{eqn:intraContextualRuleWithQuantifiers}), 
$body(r)$ is the set of quad patterns $\{c_1\text{: }t_1(\vec x$, $\vec z)$,...,$c_n\text{: }t_n(\vec x$, $\vec z)\}$,
and $head(r)$ is the set of quad patterns $\{c'_1\text{: }t'_1(\vec x, \vec y)$, ...  $c'_m\text{: }t'_m(\vec x$, $\vec y)\}$.
\begin{definition}[Quad-System]
A \emph{quad-system} $QS_{\C}$ is defined as a pair $\langle Q_{\C}, R\rangle$, where $Q_{\C}$ is a quad-graph, whose
set of context identifiers is $\C$, and $R$ is a set
of BRs.
\end{definition}
\vspace{-5pt}
 For any quad-graph  $Q_{\C}$ (BR $r$), its symbols size $\|Q_{\C}\|$ $(\|r\|)$ is the number of symbols
 required to print $Q_{\C}$ $(r)$. Hence, $\|Q_{\C}\|\approx 4*|Q_{\C}|$, where $|Q_{\C}|$ denotes the cardinality of
 the set $Q_{\C}$. Note that $|Q_{\C}|$ equals the number of quads in $Q_{\C}$. 
 For a BR $r$, $\|r\|\approx 4*k$, where $k$ is the number of quad-patterns in $r$.
 For a set of BRs $R$, its size $\|R\|$ is given as $\Sigma_{r \in R} \|r\|$. 
 For any quad-system $QS_{\C}$ $=$ $\langle Q_{\C},R \rangle$, its size $\|QS_{\C}\|$ $=$ $\|Q_{\C}\|+\|R\|$. 
\paragraph{Semantics}
In order to provide a semantics for enabling reasoning over a quad-system, 
we need to use a local semantics for each context to interpret the knowledge pertaining to it. 
Since the primary goal of this paper is a decision procedure for query answering over quad-systems based on 
forward chaining, we consider the following desiderata for the choice of the local semantics:
\begin{itemize}
 \item there exists a set of inference rules and an operation $\lclosure()$ 
 that computes the deductive closure of a graph w.r.t to the local semantics using the inference rules.
 \item given a finite graph as input, the $\lclosure()$ operation, terminates with a finite graph as output in polynomial time
 whose size is polynomial w.r.t. to the input set. 
\end{itemize}
Some of the alternatives for the local semantics satisfying the above mentioned criterion
 are  Simple, RDF, RDFS~\cite{Hayes04rdfsemantics}, OWL-Horst~\cite{terHorst200579} etc. 
 Assuming that a local semantics has been fixed, for any context $c$, 
 we denote by $I^c=\langle\Delta^c, \cdot^c\rangle$ an interpretation structure for the local semantics, 
 where $\Delta^c$ is the interpretation domain, $\cdot^c$ the corresponding interpretation function.
 Also $\models_{\local}$ denotes the local satisfaction relation between a local interpretation structure and
 a graph.
Given a quad graph $Q_{\C}$, a \emph{distributed interpretation structure} is an 
indexed set $\I^{\C}=\{I^c\}_{c\in\C}$, 
 where $I^c$ is a local interpretation structure, for each $c \in \C$.
We define the satisfaction relation $\models$ between a distributed interpretation
structure $\I^{\C}$ and a quad-system $QS_{\C}$ as: 
\begin{definition}[Model of a Quad-System]\label{def:model-quad-system}
 A distributed interpretation structure $\I^{\C}=\{I^c\}_{c\in\C}$  satisfies
 a quad-system $QS_{\C}$ $=$ $\langle Q_{\C}$, $R\rangle$, in symbols $\I^{\C}$ $\models$ $QS_{\C}$, iff all the following conditions are satisfied:
 \vspace{-5pt}
\begin{enumerate}
\item\label{item:graphSatisfaction}  
$I^c \models_{\local} graph_{Q_{\C}}(c)$, 
for each $c \in \C$;
\item\label{item:rigid} 
$a^{c_i}=a^{c_j}$, for any $a \in \const$, 
$c_i, c_j \in \C$; 
 \item\label{item:BRSatisfaction} 
 for each BR $r \in R$ of the form~(\ref{eqn:intraContextualRuleWithQuantifiers}) and
for each $\sigma \in \var \rightarrow \Delta^{\C}$, where $\Delta^{\C}$ $=$ $\bigcup_{c\in \C} \Delta^c$, if 
\vspace{-5pt}
\[
I^{c_1} \models_{\local} t_1(\vec x, \vec z)[\sigma], ..., I^{c_n} \models_{\local} 
t_n(\vec x, \vec z)[\sigma], 
\]
then there exists function $\sigma'\supseteq \sigma$, s.t.
\vspace{-5pt}
 \[
I^{c'_1} \models_{\local} t'_1(\vec x, \vec y)[\sigma'], ..., I^{c'_m} \models_{\local} t'_m(\vec x, \vec y)[\sigma'].
 \]
\end{enumerate}
\end{definition}
\vspace{-5pt}
Condition \ref{item:graphSatisfaction} in the above definition ensures that for any model $\I^{\C}$
of a quad-graph, each $I^c \in \I^{\C}$ is a local model of
the set of triples in context $c$. Condition \ref{item:rigid} ensures that
any constant $c$ is rigid, i.e. represents the same resource across a quad-graph, irrespective of the context in which it occurs.
Condition \ref{item:BRSatisfaction} ensure that any model of a quad-system satisfies each BR in it.
Any $\I^{\C}$ s.t. $\I^{\C} \models QS_{\C}$ is said to be a model of $QS_{\C}$.
A quad-system $QS_{\C}$ is said to be \emph{consistent} if there exists a model $\I^{\C}$, s.t. 
$\I^{\C} \models QS_{\C}$, and otherwise said to be \emph{inconsistent}. 
For any quad-system $QS_{\C}=\langle Q_{\C},R \rangle$, it can be the case that $graph_{Q_{\C}}(c)$ is locally consistent,
 for each $c \in \C$, whereas $QS_{\C}$ is not consistent. This is because the set of BRs $R$ adds more 
knowledge to the quad-system, and restricts the set of models that satisfy the quad-system. 
\begin{definition}[Quad-system entailment]
(a) A quad-system $QS_{\C}$ entails a quad $c\colon(s$, $p$, $o)$, in symbols $QS_{\C}$ $\models$ $c\colon(s,p,o)$,
iff for any distributed interpretation structure
$\I^{\C}$, if $\I^{\C} \models QS_{\C}$ then $\I^{\C} \models \langle \{c\colon(s,p,o)\}, \emptyset \rangle$.
 (b) A quad-system $QS_{\C}$ entails a quad-graph $Q'_{\C'}$, in symbols $QS_{\C} \models Q'_{\C'}$ iff $QS_{\C} \models c\colon(s,p,o)$
for any $c\colon(s,p,o) \in Q'_{\C'}$. 
(c)  A quad-system $QS_{\C}$ entails a BR $r$ iff 
for any 
$\I^{\C}$, if $\I^{\C} \models QS_{\C}$ then $\I^{\C} \models \langle \emptyset, \{r\} \rangle$. 
(d) For a set of BRs
$R$, $QS_{\C} \models R$ iff $QS_{\C} \models r$, for every $r \in R$. 
(e)  Finally, a quad-system $QS_{\C}$ entails another 
quad-system $QS'_{\C'}$ $=$ $\langle Q'_{\C'}, R' \rangle$, in symbols $QS_{\C} \models QS'_{\C'}$ iff $QS_{\C} \models Q'_{\C'}$ and
$QS_{\C} \models R'$. 
\end{definition}
\vspace{-5pt}
\noindent We call  the decision problems (DPs) corresponding to the entailment problems (EPs) in  (a), (b), (c), (d), and (e) as
\emph{quad EP, quad-graph EP, BR EP, BRs EP, and quad-system EP}, respectively.
\vspace{-5pt}
\vspace{-5pt}
\section{Query Answering on Quad-Systems}\label{sec:Query}
In the realm of quad-systems, the classical conjunctive queries or select-project-join queries are 
slightly extended to what we call \emph{Contextualized
Conjunctive Queries} (CCQs). A CCQ $CQ(\vec x)$ is an expression of the form:
\begin{equation}\label{eqn:CCQ}
 \exists \vec y \ q_1(\vec x, \vec y) \wedge ... \wedge q_p(\vec x, \vec y)
\end{equation}
where $q_i$, for $i=1,...,p$ are quad patterns over vectors of \emph{free variables} $\vec x$ 
and \emph{quantified variables} $\vec y$. A CCQ is called a boolean CCQ if it does not have any free variables.
For any CCQ $CQ(\vec x)$ and a vector $\vec a$ of constants s.t. $\|\vec x\|=\|\vec a\|$, $CQ(\vec a)$ is boolean.  
A vector $\vec a$ is an \emph{answer} for a CCQ $CQ(\vec x)$ 
w.r.t. structure $\I_{\C}$, 
in symbols $\I_{\C} \models CQ(\vec a)$, iff there exists assignment $\mu\colon \{\vec y\}\rightarrow  \bn$ s.t.
$\I_{\C} \models \bigcup_{i=1, \ldots, p} q_i(\vec a, \vec y)[\mu]$. 
A vector $\vec a$ is a \emph{certain answer} for a CCQ $CQ(\vec x)$ over a quad-system $QS_{\C}$, iff
$\I_{\C} \models CQ(\vec a)$, for every model $\I_{\C}$ of $QS_{\C}$.
 Given a quad-system $QS_{\C}$, a CCQ $CQ(\vec x)$, and a vector $\vec a$,
 DP of determining whether $QS_{\C} \models CQ(\vec a)$ is called the \emph{CCQ EP}.
It can be noted that the other DPs over quad-systems that we have seen are reducible to CCQ EP. Hence,
in this paper, we primarily focus on the CCQ EP. 

\paragraph{dChase of a Quad-System}
In order to do query answering over a quad-system, we employ what has been called
in the literature, a \emph{chase}~\cite{JohnsonK84,AbiteboulHV95}, specifically,  
we adopt notion of 
the \emph{skolem chase} given in Marnette~\cite{Marnette:2009} and Cuenca Grau et al~\cite{Bernardo:dlacyclicity:2012}.
In order to fit the framework of quad-systems, we extend the standard notion of chase to a 
 \emph{distributed chase}, abbreviated \emph{dChase}. In the following, we show how the dChase of a quad-system can be constructed.
 
For any BR $r$ of the form (\ref{eqn:intraContextualRuleWithQuantifiers}), 
the \emph{skolemization} $sk(r)$ is the result of replacing each $y_i \in \{\vec y\}$ with a
globally unique Skolem function $f_i^r$, s.t. 
$f_i^r\colon$ $\const^{\|\vec x\|}$ $\rightarrow$ $\bn_{sk}$, where $\bn_{sk}$ is a fresh set of blank nodes
called \emph{skolem blank nodes}. Intuitively, for every distinct 
vector $\vec a$ of constants, with $\|\vec a\|=\|\vec x\|$, 
$f_i^r(\vec a)$ is a fresh blank node, whose node id is a hash of $\vec a$. 
Let $\vec f^r=\langle f_1^r, ..., f_{\|\vec y\|}^r \rangle$ be a vector of distinct Skolem functions;
for any BR $r$ the form (\ref{eqn:intraContextualRuleWithQuantifiers}), with slight abuse (Datalog notation) we write its 
skolemization $sk(r)$ as follows:
\vspace{-5pt}
\begin{eqnarray}\label{eqn:intraContextualRule}
 c_1:t_1(\vec x, \vec z), ..., c_n\colon t_n(\vec x, \vec z)   
 \rightarrow c'_1\colon t'_1(\vec x, \vec f^r), ..., c'_m\colon t'_m(\vec x, \vec f^r)
\end{eqnarray}
Moreover, a skolemized BR $r$ of the form  (\ref{eqn:intraContextualRule}) can be replaced
by the following semantically equivalent set of formulas, whose symbol size is worst case quadratic w.r.t $\|r\|$:
\begin{eqnarray}
&& \{c_1\colon t_1(\vec x, \vec z), ..., c_n\colon t_n(\vec x, \vec z) \rightarrow c'_1\colon t'_1(\vec x, \vec f^r), \label{eqn:BRsingleHead} \\ 
&& ..., \nonumber \\
&&  c_1\colon t_1(\vec x, \vec z), ..., c_n\colon t_n(\vec x, \vec z) \rightarrow c'_m\colon t'_m(\vec x, \vec f^r) \}\nonumber
\end{eqnarray} 
Note that each BR in the above set has exactly one quad pattern with optional function symbols
in its head part. Also note that a BR with out function symbols can be replaced with a set of BRs with 
single quad-pattern heads. Hence, w.l.o.g, we assume that any BR in a skolemized set $sk(R)$ of BRs 
is of the form (\ref{eqn:BRsingleHead}).
For any quad-graph $Q_{\C}$ and a skolemized BR $r$ of the form~(\ref{eqn:BRsingleHead}), 
\emph{application} of $r$ on $Q_{\C}$, denoted by $r(Q_{\C})$, is given as:
\vspace{-5pt}
\[
r(Q_{\C})=
\bigcup_{\mu \in \var \rightarrow \const} \left \{ 
\begin{array}{l}
 c'_1:t'_1(\vec x, \vec f^r)[\mu] \ | \ c_1:t_1(\vec x,\vec z)[\mu]  \in Q_{\C}, ..., c_n:t_n(\vec x, \vec z)[\mu] \in Q_{\C}
\end{array}
\right\}
\]
For any set of skolemized BRs $R$, application of $R$ on $Q_{\C}$ is given by:
$ R(Q_{\C})$ $= $ $\bigcup_{r\in R} r(Q_{\C})$.
For any quad-graph $Q_{\C}$, we define:
\begin{eqnarray}
 \lclosure(Q_{\C})= \bigcup_{c \in \C}  
 \{c:(s,p,o) \ | (s,p,o) \in \lclosure(graph_{Q_{\C}}(c))\} \nonumber 
\end{eqnarray}
For any quad-system $QS_{\C}$ $=$ $\langle Q_{\C},R \rangle$, 
 \emph{generating BRs} $R_F$ is the set of BRs in $sk(R)$ with function symbols, and   
the \emph{non-generating BRs}  is the set $R_I=sk(R)\setminus R_F$. \\
Let $dChase_0(QS_{\C})$ $=$ $\lclosure(Q_{\C})$; for $i\in \mathbb{N}$, $dChase_{i+1}(QS_{\C})$ $=$
\[
 \begin{array}{l l}
 \lclosure(dChase_i(QS_{\C})\cup R_I(dChase_i(QS_{\C}))),  & \hspace{.5cm }\text{if } R_I(dChase_i(QS_{\C})) \not \subseteq  \\
 & \hspace{.5cm } dChase_i(QS_{\C});\\
 \lclosure(dChase_i(QS_{\C}) \cup R_F(dChase_i(QS_{\C}))), & \hspace{.5cm } \text{otherwise};
 \end{array}
\]
The dChase of $QS_{\C}$, denoted $dChase(QS_{\C})$, is given as:
\[
 dChase(QS_{\C})= \bigcup_{i\in \mathbb{N}} dChase_i(QS_{\C})
\]
Intuitively, $dChase_i(QS_{\C})$ can be thought of as the state of $dChase(QS_{\C})$ at the end of iteration $i$. 
It can be noted that, if there exists $i$ s.t. $dChase_i(QS_{\C})$ $=$ $dChase_{i+1}(QS_{\C})$, 
then $dChase(QS_{\C})$ $=$ $dChase_i(QS_{\C})$.
An iteration $i$, s.t. $dChase_i(QS_{\C})$ is computed by the application of the set of (resp. non-)generating 
BRs $R_F$ (resp. $R_I$), on $dChase_{i-1}(QS_{\C})$ is called a (resp. \emph{}non-)\emph{generating iteration}.
The dChase $dChase(QS_{\C})$ of a consistent quad-system $QS_{\C}$ 
 is a \emph{universal model}~\cite{Deutsch:2008} of the quad-system, i.e. it is a model of $QS_{\C}$, and 
 for any model $\I_{\C}$ of $QS_{\C}$, there is a homomorphism from $dChase(QS_{\C})$ to $\I_{\C}$. Hence, for any boolean CCQ $CQ()$, 
 $QS_{\C} \models CQ()$ iff there exists a map $\mu\colon \var(CQ) \rightarrow \const$ s.t. $\{CQ()\}[\mu]\subseteq dChase(QS_{\C})$.
We call the sequence $dChase_0(QS_{\C})$, $dChase_1(QS_{\C}), ...,$ the \emph{dChase sequence} of $QS_{\C}$. 
The following lemma shows that in a dChase sequence of a quad-system, the result of a single generating 
iteration and a subsequent number of non-generating iterations causes only an exponential blow up in size.
\begin{lemma}\label{lemma:chaseSizeIncrease}
 For a quad-system $QS_{\C}=\langle Q_{\C}, R \rangle$, the following holds:   
 (i) if $i \in \mathbb{N}$ is a generating iteration, then
$\|dChase_i(QS_{\C})\|= \bigO(\|dChase_{i-1}(QS_{\C})\|^{\|R\|})$,
(ii) suppose $i \in \mathbb{N}$ is a generating iteration, and for any $j \geq 1$, 
$i+1$, $...$, $i+j$ are non-generating iterations, then
$\|dChase_{i+j}(QS_{\C})\|= \bigO(\|dChase_{i-1}(QS_{\C})\|^{\|R\|})$,
(iii) for any iteration $k$, $dChase_k(QS_{\C})$ can be computed in time 
$\bigO(\|dChase_{k-1}(QS_{\C})\|^{\|R\|})$.
\end{lemma}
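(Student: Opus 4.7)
The plan is to prove the three claims in sequence by bounding the number of new quads a single rule application can produce, and then folding in the stated polynomial behaviour of $\lclosure$.

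For (i), at a generating iteration $i$ we have $dChase_i(QS_{\C}) = \lclosure(dChase_{i-1}(QS_{\C}) \cup R_F(dChase_{i-1}(QS_{\C})))$. For each skolemized BR $r \in R_F$ of the form (\ref{eqn:BRsingleHead}), the body has some $k_r$ variables, and since each variable contributes at least one symbol to the printed representation of $r$ we have $k_r \leq \|r\| \leq \|R\|$. The number of homomorphisms from $body(r)$ into $dChase_{i-1}(QS_{\C})$ is therefore bounded by $|dChase_{i-1}(QS_{\C})|^{\|R\|}$, and each homomorphism yields exactly one quad in $r(dChase_{i-1}(QS_{\C}))$. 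Summing over at most $|R|$ rules and using $\|Q\| \approx 4|Q|$ for any quad-graph $Q$, we get $\|R_F(dChase_{i-1}(QS_{\C}))\| = \bigO(\|dChase_{i-1}(QS_{\C})\|^{\|R\|})$. Applying $\lclosure$ then preserves polynomial size by the stated desiderata on the local semantics, yielding the claimed bound.

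For (ii), the key observation is that the non-generating BRs in $R_I$ introduce no new Skolem blank nodes, so the active domain (the set of constants actually appearing) of $dChase_{i+j}(QS_{\C})$ coincides with that of $dChase_i(QS_{\C})$. Letting $A$ denote this active domain, every quad in $dChase_{i+j}(QS_{\C})$ has the form $c\colon(s,p,o)$ with $c \in \C$ and $s,p,o \in A$, hence $|dChase_{i+j}(QS_{\C})| \leq |\C|\cdot |A|^3$, which is polynomial in $\|dChase_i(QS_{\C})\|$ and therefore, by (i), in $\|dChase_{i-1}(QS_{\C})\|^{\|R\|}$. For (iii), computing all matches of each rule body takes $\bigO(\|dChase_{k-1}(QS_{\C})\|^{\|R\|})$ time by the same variable-counting argument, and the subsequent $\lclosure$ step runs in polynomial time on its input by hypothesis; the stated time bound follows after summing over the at most $|R|$ rules.

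The main obstacle will be to state the proof so that the polynomial overhead of $\lclosure$, whose exact degree depends on the chosen local semantics, is safely absorbed into the $\|R\|$-th-power bound. The cleanest way to handle this is to fold the $\lclosure$ cost into the multiplicative constant inside $\bigO(\cdot)$, relying on the fact that the local semantics (and hence the degree of its closure polynomial) is a fixed parameter of the framework, not part of the input $QS_{\C}$. A secondary subtlety is to justify the bound $k_r \leq \|R\|$ carefully, which rests on the elementary observation that distinct variables in $body(r)$ occupy distinct positions in the textual representation of $r$.
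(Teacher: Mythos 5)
Your proof follows essentially the same route as the paper's: bound the number of groundings/homomorphisms of the rule bodies into $dChase_{i-1}(QS_{\C})$ by $\bigO(\|dChase_{i-1}(QS_{\C})\|^{\|R\|})$, observe that non-generating iterations introduce no new constants so subsequent growth is only polynomial, and absorb the polynomial behaviour of $\lclosure$ into the stated bound, both for size and for the per-iteration computation time. The one caveat — shared with the paper's own sketch — is that a polynomial $\lclosure$ blow-up strictly multiplies the exponent by a fixed constant (degree of the closure polynomial) rather than being a multiplicative constant, but since that degree is a fixed parameter of the local semantics this does not affect how the lemma is used later.
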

\begin{proof}[sketch] 
(i) $R$ can be applied on $dChase_{i-1}(QS_{\C})$ by grounding $R$
to the set of constants in $dChase_{i-1}(QS_{\C})$, 
the number of such groundings is of the order $\bigO($ $\|dChase_{i-1}(QS_{\C})\|^{\|R\|})$,
$\|R(dChase_{i-1}(QS_{\C}))\|$ $=$ $\bigO(\|R\|*\|dChase_{i-1}(QS_{\C})\|^{\|R\|})$.
Since $\lclosure$ only increases the size polynomially,
$\|dChase_i(QS_{\C})\|$ $=$ $\bigO($ $\|dChase_{i-1}($ $QS_{\C})\|^{\|R\|})$.

(ii) From (i) we know that $\|R(dChase_{i-1}(QS_{\C}))\|$ $=$ $\bigO(\|dChase_{i-1}(QS_{\C})\|^{\|R\|})$. Since, no new constant
is introduced in any subsequent non-generating iterations, and since any quad contains only four constants,
the set of constants in any subsequent dChase iteration is $\bigO(4*\|dChase_{i-1}(QS_{\C})\|^{\|R\|})$.
Since only these many constants can appear in positions $c,s,p,o$ of any quad generated in the 
subsequent iterations, the size of $dChase_{i+j}(QS_{\C})$ can only increase polynomially, which means
that $\|dChase_{i+j}($ $QS_{\C})\|$ $=$ $\bigO(\|dChase_{i-1}(QS_{\C})\|^{\|R\|})$.

(iii) Since any dChase iteration $k$ involves the following two operations: (a) $\lclosure()$, and
(b) computing $R(dChase_{k-1}(QS_{\C}))$. 
(a) can be done in PTIME w.r.t to its input. (b) can be 
done in the following manner: ground $R$ to the set of constants in $dChase_{i-1}(QS_{\C})$; then for each grounding $g$, 
 if $body(g)\subseteq dChase_{i-1}(QS_{\C})$, then add $head(g)$ to $R(dChase_{k-1}(QS_{\C}))$.
 Since, the number of such groundings is of the order $\bigO(\|dChase_{k-1}(QS_{\C})\|^{\|R\|})$,
 and checking if each grounding is contained in $dChase_{k-1}(QS_{\C})$, can be done in time polynomial in 
 $\|dChase_{k-1}(QS_{\C})\|$, the time taken for (b) is $\bigO(\|dChase_{k-1}(QS_{\C})\|^{\|R\|})$.
 Hence, any iteration $k$ can be done in time $\bigO(\|dChase_{k-1}(QS_{\C})\|^{\|R\|})$.
$\qed$
\end{proof}
\noindent 
Although, we now know how to compute the dChase of a quad-system, which can be used for deciding CCQ EP, 
it turns out that for the class of quad-systems whose BRs are of the form 
(\ref{eqn:intraContextualRuleWithQuantifiers}), which we call \emph{unrestricted quad-systems}, 
the dChase can be infinite.
This raises the question if there are other approaches that can be used, for instance
similar problem arises in DLs with value creation, due to the presence of existential quantifiers, 
whereas the approaches like the one in Glim et al.~\cite{GliHoLuSa-07} provides an algorithm for CQ entailment 
based on query rewriting.  
\begin{theorem}\label{theorem:undecidable}
 The  CCQ EP over unrestricted quad-systems is undecidable.
\end{theorem}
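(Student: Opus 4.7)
The plan is to reduce a known undecidable problem to CCQ entailment. Since BRs of the form~(\ref{eqn:intraContextualRuleWithQuantifiers}) specialized to a single context $c$ already subsume arbitrary tuple-generating dependencies (TGDs) — the body is a conjunction of atoms and the head a conjunctively existentially quantified conjunction of atoms — undecidability could, in principle, be imported directly from the classical result that CQ entailment under general TGDs is undecidable. However, for a self-contained argument I would simulate a deterministic Turing machine $M$ on an input $w$ and construct a quad-system $QS_{\C}$ together with a boolean CCQ $q()$ such that $M$ halts on $w$ iff $QS_{\C} \models q()$. Undecidability of the halting problem then yields undecidability of CCQ EP.

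The construction goes in three steps. \textbf{(i) Encoding configurations.} Use a single context $c$ and represent each TM configuration as a linked list of ``cell'' resources, where each cell carries four pieces of data: its tape symbol, a pointer to the next cell, the current state (only on the cell currently under the head), and a configuration identifier. The initial configuration corresponding to $w$ is written as an explicit set of quads in $graph_{Q_{\C}}(c)$. \textbf{(ii) Transitions.} For each transition of $M$, introduce a BR whose body matches a head cell in a given state/symbol together with the relevant neighbours, and whose head existentially creates a \emph{fresh configuration}: a new list of cells obtained by copying the old one cell-by-cell with the appropriate local modification. The existential quantifier in the head is used precisely to mint fresh blank-node identifiers for the new cells; under the skolem chase semantics of Section~3, each existential is replaced by a Skolem function of the body's frontier variables, so that distinct configurations (and distinct tape positions within them) receive distinct cells and never collapse. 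The ``copy-over'' of the unchanged portion of the tape is implemented by auxiliary BRs that walk the tape in lockstep using a fresh marker. \textbf{(iii) Detecting halting.} Add one BR that fires whenever a cell is tagged with a halting state, producing a distinguished atom $c\colon(h,\mathrm{type},\mathrm{Halt})$, and let $q()=\exists x.\ c\colon(x,\mathrm{type},\mathrm{Halt})$.

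Correctness uses the universal-model property of $dChase(QS_{\C})$ recalled just before the theorem: $QS_{\C}\models q()$ iff the halt atom occurs in $dChase(QS_{\C})$. If $M$ halts on $w$, the chase faithfully reproduces the halting computation in finitely many generating iterations, so the halt atom appears and $QS_{\C}\models q()$. Conversely, if $M$ does not halt, no halting cell is ever produced in the (possibly infinite) chase, hence $QS_{\C}\not\models q()$. The principal obstacle is engineering the copy-and-modify BRs so that the chase mimics exactly one computation branch of $M$ — in particular, preventing spurious identifications between cells of different configurations and ensuring that the ``tape copy'' is complete before the next transition fires. These issues are handled by choosing the Skolem functions' argument tuples to include the configuration identifier and the cell's position, and by sequencing the copy BRs through auxiliary marker predicates so that a transition BR is applicable only once a full copy of the preceding configuration has been produced.
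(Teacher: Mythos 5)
Your reduction is from the halting problem of a deterministic TM, whereas the paper reduces from non-emptiness of intersection of two context-free grammars: each production $v \rightarrow w_1\dots w_n$ becomes a non-generating BR $c\colon(x_1,w_1,x_2),\dots,c\colon(x_n,w_n,x_{n+1}) \rightarrow c\colon(x_1,v,x_{n+1})$, a single generating BR per terminal grows an unbounded chain of fresh blank nodes out of the one seed quad $c\colon(a,\texttt{rdf:type},C)$, and the CCQ asks for a word that is simultaneously $S_1$- and $S_2$-derivable, i.e.\ $\exists y\, c\colon(a,S_1,y)\wedge c\colon(a,S_2,y)$. That route needs no configuration objects, no tape copying, no inertia or ordering machinery; correctness is a short induction on word length. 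Your DTM simulation, by contrast, must mint unboundedly many configuration objects, extend the tape on demand, and copy unchanged cells between configurations---essentially the machinery the paper deploys only in the appendix for 2EXPTIME-hardness of the acyclic case, where the tape is pre-built and bounded. Both reductions are legitimate: yours is the classical Beeri--Vardi-style argument and makes explicit that single-context BRs already subsume general TGDs; the paper's is lighter because the grammar encoding pushes all the work into the query.

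Two caveats on your sketch. First, you cannot freely ``choose the Skolem functions' argument tuples'': under the paper's skolemization the arguments are exactly the frontier variables $\vec x$ of the BR, so you must arrange the transition BRs so that the configuration identifier (and anything else needed to individuate fresh cells) genuinely occurs among the variables shared between body and head. Second, making ``a transition BR applicable only once a full copy of the preceding configuration has been produced'' is not something positive markers can do directly---completeness of a copy is a global condition that forall-existential rules cannot test. Either implement it by token-passing along the explicitly terminated used portion of the tape, or, better, drop the sequencing requirement: since $M$ is deterministic and each successor-configuration object is created by exactly one rule application, the transition and copy rules may interleave arbitrarily in the chase, and an induction shows each configuration object accumulates exactly the facts of the corresponding real configuration, so the $\mathrm{Halt}$ atom appears in $dChase(QS_{\C})$ iff $M$ halts. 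With those repairs your argument goes through and establishes the theorem by a genuinely different, though heavier, reduction than the paper's.
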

\vspace{-2pt}
\begin{proof}[sketch]
We show that the well known undecidable problem of 
non-emptiness of intersection of context-free grammars (CFGs) is reducible to the CCQ entailment
problem. Given two CFGs, $G_1=\langle V_1, T, S_1, P_1 \rangle$ and $G_2=\langle V_2, T, S_2, P_2 \rangle$, 
where $V_1, V_2$ are the set of variables, 
$T$ s.t. $T \cap (V_1 \cup V_2)=\emptyset$ is the set of terminals. $S_1 \in V_1$ is the start symbol of $G_1$,
and $P_1$ are the set of PRs of the form $v \rightarrow \vec w$, where $v \in V$, $\vec w$ is a sequence of the form
$w_1...w_n$, where $w_i \in V_1 \cup T$. Similarly $s_2, P_2$ is defined.  Deciding whether the language
generated by the grammars $L(G_1)$ and $L(G_2)$ have non-empty intersection is known to be 
undecidable~\cite{Harrison-formal-language}.

Given two CFGs $G_1=\langle V_1, T, S_1, P_1 \rangle$ and $G_2=\langle V_2, T, S_2, P_2 \rangle$,
we encode grammars $G_1, G_2$ into a quad-system $QS_c=\langle Q_{c},R\rangle$, with only a single context identifier $c$.
Each PR $r= v \rightarrow \vec w \in P_1 \cup P_2$, with $\vec w=w_1w_2w_3..w_n$, is encoded as a BR of the form:
$c\colon(x_1,w_1,x_2), c\colon(x_2, w_2, x_3),...,c\colon(x_n,w_n,x_{n+1}) 
\rightarrow c\colon(x_1,v,x_{n+1}), \nonumber$
where $x_1,..,x_{n+1}$ are variables. For each terminal symbol $t_i \in T$, $R$ contains
a BR of the form:
$ c\colon(x, \texttt{rdf:type}, C) \rightarrow \exists y \ c\colon(x,t_i,y), 
 c\colon(y, \texttt{rdf:type},C) \nonumber$
and $Q_c$ is the singleton:
$\{ c\colon(a,\texttt{rdf:type},C)\}$.
It can be observed that:
\begin{eqnarray}
 QS_c \models \exists y \ c\colon(a , S_1, y) \ \wedge \ c\colon(a, S_2, y) \Leftrightarrow 
L(G_1)\cap L(G_2)\neq \emptyset \nonumber
\end{eqnarray}
We refer the reader to Appendix for the complete proof.
$\qed$ 
\end{proof} 
\section{Context Acyclic Quad-Systems: A decidable class}\label{sec:cAcyclic}
In the previous section, we saw that query answering on unrestricted quad-systems is undecidable, in general. 
We in the following define a class of quad-systems 
for which query entailment is decidable. The class has the property that algorithms based on forward chaining, for deciding query entailment, 
 can straightforwardly be implemented (by minor extensions) on existing quad stores. It should be noted that the technique we propose is 
 reminiscent of the \emph{Weak acyclicity}~\cite{Fagin05dataexchange,Deutsch03reformulationof} technique used in the realm of Datalog+-.
Consider a BR $r$ of the form:
$
 c_1\colon t_1(\vec x, \vec z), c_2\colon t_2(\vec x, \vec z) \rightarrow
\exists \vec y \ c_3\colon t_3(\vec x, \vec y), c_4\colon t_4(\vec x$, $\vec y).
$
Since such a rule triggers propagation of knowledge in a quad-system, specifically triples from the source contexts
$c_1,c_2$ to the target contexts $c_3,c_4$ in a quad-system. 

\begin{wrapfigure}{r}{0.65\textwidth}
\centering
 \begin{tikzpicture} 
 \draw[-,xscale=1.3] (0,0) --  (1,0.5)  --  (2,0) -- node[above] {} (1,-0.5) -- (0,0) ;
 \draw[-,xscale=1.3,xshift=1cm,yshift=0.5cm] (0,0) -- node[below] {} (1,0.5) -- (2,0) -- (1,-0.5) -- (0,0) ;
 \draw[-,xscale=1.3, xshift=2cm,yshift=0cm] (0,0) -- node[below] {} (1,0.5) -- (2,0) -- (1,-0.5) -- (0,0) ;
 \draw[-,xscale=1.3, xshift=1cm,yshift=-0.5cm] (0,0) -- node[below] {} (1,0.5) -- (2,0) -- (1,-0.5) -- (0,0) ;
 \draw[-,xscale=1.3, xshift=-0.5cm,yshift=0.75cm] (0,0) -- node[below] {} (1,0.5) -- (2,0) -- (1,-0.5) -- (0,0) ;
 \draw[-,xscale=1.3, xshift=2.5cm,yshift=0.75cm] (0,0) -- node[below] {} (1,0.5) -- (2,0) -- (1,-0.5) -- (0,0) ;
 \draw [->] (0,0.8) to[out=45,in=135] 
 node[above]{
$
c_1\text{: }t_1(\vec x,\vec z),c_2\text{: }t_2(\vec x,\vec z)\rightarrow \exists \vec y
 \ c_3\text{: }t_3(\vec x,\vec y),c_4\text{: }t_4(\vec x,\vec y)  
$
} 
 (5.2,0.8);
 
 \draw [-] (0.7,0) to[out=65,in=235] (2,1.82);
 \draw [->] (3,1.86) to[out=-15,in=115] (4,0);
 
 \draw (0.8,0.6) node {$c_1$};
  \draw (1.2,-0.1) node {$c_2$};
   \draw (4.5,0.6) node {$c_3$};
\draw (4.2,-0.1) node {$c_4$};
\end{tikzpicture}
\caption{}
\label{fig:propogationRule}
\end{wrapfigure}
\noindent As shown in Fig.~\ref{fig:propogationRule},
we can view a BR as a propagation rule across distinct compartments of knowledge, divided as contexts. 
For any BR of the form~(\ref{eqn:intraContextualRuleWithQuantifiers}), each context in the 
set $\{c'_1,...,c'_m\}$ is said to depend on the set of contexts $\{c_1,...,c_n\}$.  
In a  quad-system 
$QS_{\C}=\langle Q_{\C}, R\rangle$, 
for any $r \in R$, of the form~(\ref{eqn:intraContextualRuleWithQuantifiers}), any context whose identifier is in the set
$\{c \ | \ c\colon(s,p,o) \in head(r), s\text{ or }p\text{ or }o$ is an existentially quantified variable$\}$, 
is called  a \emph{triple} \emph{generating} \emph{context} (TGC).
One can analyze the set of  BRs in a quad-system $QS_{\C}$ using a context
dependency graph, which is a directed graph, whose nodes are context identifiers in $\C$,
s.t. the nodes corresponding to TGCs are marked with a $*$, and whose edges are constructed as follows:
for each BR of the form~(\ref{eqn:intraContextualRuleWithQuantifiers}), 
there exists an edge from each $c_i$ to $c'_j$, for $i=1,...,n, j=1,...,m$. 
A quad-system is said to be 
\emph{context acyclic}, iff its context dependency graph does not contain cycles involving TGCs. 
\begin{example}\label{eg:dependency}
Consider a quad-system, whose  set of BRs $R$ are:
 \begin{eqnarray}
 c_1\colon (x_1, x_2, \mathsf{U_1}) \rightarrow \exists y_1 \ c_2\colon (x_1, x_2,y_1), 
 c_3\colon (x_2, \texttt{rdf:type}, \texttt{rdf:Property}) \label{eqn:BRinExample} \\
c_2\colon (x_1,x_2,z_1) \rightarrow c_1\colon (x_1, x_2, \mathsf{U_1})\hspace{6.5cm}\label{eqn:BRinExample2}  \\
c_3\colon (x_1, x_2, x_3) \rightarrow c_1\colon(x_1, x_2, x_3) \hspace{6.5cm}  \nonumber
 \end{eqnarray}
where $\mathsf{U_1}$ be a URI,
whose corresponding dependency graph is shown in Fig.~\ref{fig:dependency-graph}. Note that the node corresponding
to the triple generating context $c_2$ is marked with a `$*$' symbol. 
Since  the cycle $(c_1,c_2,c_1)$ in the quad-system 
contains $c_2$ which is a TGC, the quad-system is not context acyclic. 
\end{example}

\noindent In a context acyclic quad-system $QS_{\C}$, since there exists no cyclic path through any TGC node in the 
context dependency graph, there exists a set of TGCs $\C' \subseteq \C$ s.t. for any $c \in \C'$, there exists no 
incoming path\footnote{assume that paths have at least one edge} from a TGC to $c$. We call such TGCs, \emph{level-1 TGCs}. 
In other words, a TGC $c$ is a level-1 TGC, if for any $c'\in {\C}$,
there exists an incoming path from $c'$ to $c$, implies  $c'$ is not a TGC. 
For $l\geq 1$, a level-$l\text{+}1$ TGC $c$ is a TGC 
that has an incoming path from a level-$l$ TGC, and for any incoming path from a level-$l'$ TGC to $c$, is s.t. $l'\leq l$. 
Extending the notion of level also to the non-TGCs, we say that any non-TGC that does not have any incoming paths from a TGC
  \begin{wrapfigure}{l}{0.4\textwidth}
\centering
\begin{tikzpicture}
[place/.style={circle,draw=black!50,fill=white!20,thick,
inner sep=0pt,minimum size=6mm}
]
\node(c1) at ( 0,1.5) [place] {$c_1$};
\node(c2)[place,,label=below:$*$] at (-1.4,0) [place] {$c_2$};
\node(c3) at ( 1.4,0) [place] {$c_3$};
\draw [->] (c1.west) to[out=202.5,in=67.5] node[auto]{} (c2.north);
\draw [->] (c3.north) to[out=105,in=-22.5] node[auto]{} (c1.east);
\draw [->] (c2.east) to[out=22.5,in=247.5] node[auto,swap]{} (c1.south);
\draw [->] (c1.south) to[out=297.5,in=152.5] node[auto,swap]{} (c3.west);
\end{tikzpicture}
\vspace{-10pt}
\caption{Context\\ Dependency graph}
\label{fig:dependency-graph}
\end{wrapfigure}
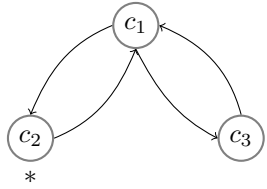
is at level-0; we say that any non-TGC $c \in \C$ is at level-$l$, if there exists an 
incoming path from a level-$l$ TGC to $c$, and for any incoming path from a level-$l'$ TGC to $c$, is s.t. $l'\leq l$.
Hence, the set of contexts in a context acyclic quad-system can be partitioned using the above notion of levels.
\begin{definition}
 For a quad-system $QS_{\C}$, a context $c \in \C$ is said to be
\emph{saturated} in an iteration $i$, iff for any quad of the form $c\colon(s,p,o)$,
$c\colon(s,p,o) \in dChase(QS_{\C})$ implies $c\colon(s,p,o) \in dChase_i(QS_{\C})$.
\end{definition}
Intuitively, context $c$ is saturated in the dChase iteration $i$, if no new quad of the form $c\colon(s,p,o)$ will be 
generated in any $dChase_k(QS_{\C})$, for any $k > i$. The following lemma  
gives the relation between the saturation of a context and the required number of dChase iterations, for a context acyclic
quad-system. 
\begin{lemma}\label{lemma:saturatedContext}
 For any context acyclic quad-system, the following holds: 
 (i) any level-0 context  is saturated before the first generating iteration,
 (ii) any level-1 TGC  is saturated after the first generating iteration,
 (iii) any level-$k$ context is saturated before the $k+1$th generating iteration.
\end{lemma}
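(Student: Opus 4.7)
The plan is induction on the level $k$, noting that (i) and (ii) correspond to the $k=0$ and $k=1$ instances of (iii). First I would establish a structural observation about the dependency graph: if a BR $r$ produces quads in context $c$, then for every body context $c_i$ of $r$ there is an edge $c_i \to c$, and concatenating this edge with any incoming TGC-path ending at $c_i$ yields an incoming TGC-path to $c$ whose source TGC has the same level. Applying the level definitions from the excerpt gives: (a) if $c$ is a TGC at level $k$, then every body context of $r$ is at level at most $k - 1$, and (b) if $c$ is a non-TGC at level $k$, then every body context of $r$ is at level at most $k$.

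For the base case $k = 0$, level-$0$ contexts are non-TGCs and by the structural observation every body context of a rule producing quads in them must itself be level-$0$, since a TGC body context would yield an incoming TGC-path to the target. Since level-$0$ targets are non-TGCs, the single-head skolemization from~(\ref{eqn:BRsingleHead}) produces only rules in $R_I$ for such heads. Thus the level-$0$ fragment is closed under applications of $R_I$ and $\lclosure$, neither of which introduces new constants, so the initial non-generating iterations reach a fixpoint on it. Because the dChase construction applies $R_F$ only once $R_I$ has stabilized, every level-$0$ context is saturated before the first generating iteration.

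For the inductive step, assuming every context of level at most $k$ is saturated before the $(k+1)$th generating iteration, I would prove the analogue for level-$(k{+}1)$. For a level-$(k{+}1)$ TGC $c$, the structural observation forces every body context of every producing BR to be at level at most $k$, and by IH those bodies are already saturated at the start of the $(k+1)$th generating iteration. Consequently every generating rule with $c$ in its head fires to completion during that iteration, introducing all required skolem blank nodes in $c$; every non-generating rule with $c$ in its head fires analogously during the subsequent non-generating phase; and $\lclosure$ augments $c$ with its local consequences -- all before the $(k+2)$th generating iteration. For a level-$(k{+}1)$ non-TGC $c$, every producing rule lies in $R_I$ but its body may include level-$(k{+}1)$ contexts, yielding possibly cyclic non-generating dependencies within the level-$(k{+}1)$ stratum. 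However, no new constants are minted during the non-generating phase, so mutual saturation over a finite constant pool terminates, and the dChase construction invokes the $(k+2)$th generating iteration only after this fixpoint is reached.

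The most delicate point will be the structural observation, especially the case where $c$ is a level-$(k{+}1)$ non-TGC whose body context is itself a level-$(k{+}1)$ TGC -- the bound on that body level is then $k+1$ rather than $k$, which forces the inductive step to treat level-$(k{+}1)$ TGCs and non-TGCs in the correct order (TGCs saturating first during the $(k+1)$th generating iteration, then non-TGCs during the ensuing non-generating phase) rather than in parallel. Termination of the intra-stratum mutual saturation among level-$(k{+}1)$ contexts then reduces to the standard argument that non-generating rules terminate over a fixed finite Herbrand universe.
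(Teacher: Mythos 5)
Your proposal is correct and follows essentially the same route as the paper's proof: the facts you isolate --- that a level-$l$ context can only be fed from contexts of level at most $l$ (strictly below $l$ when the target is a TGC), that non-TGC heads give rise only to non-generating single-head rules after skolemization, and that the dChase exhausts $R_I$ before every generating iteration --- are exactly what the paper uses to establish (i) and (ii). Your explicit induction with the TGC/non-TGC ordering inside each level simply spells out the step the paper leaves terse as ``a generalization of (i) and (ii)'', so there is nothing substantively different to report.
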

\begin{proof}
Let $QS_{\C}=\langle Q_{\C},R \rangle$ be the quad-system, whose first generating iteration is $i$.
 
 (i) for any level-0 context $c$,  
 any BR $r \in R$, and any quad-pattern of the form $c\colon(s,p,o)$, if $c\colon(s,p,o) \in 
 head(r)$, then for any $c'$ s.t. $c'\colon(s',p',o')$ occurs in $body(r)$ implies that $c'$ is a level-0 context and
 $r$ is a non-generating BR. Also, since $c'$ is a level-0 context, the same applies to $c'$. Hence, it turns out 
 that only non-generating BRs can bring triples to any level-0 context. Since at the end of iteration $i-1$, $dChase_{i-1}(QS_{\C})$ is closed w.r.t. 
 the set of non-generating BRs (otherwise,
 by construction of dChase, $i$ would not be a generating iteration). This implies that $c$
 is saturated before the first generating iteration $i$.
 
 (ii) for any level-1 TGC $c$, any
 BR $r \in R$, and any quad-pattern $c\colon(s,p,o)$, if $c\colon(s,p,o) \in head(r)$, then 
 for any $c'$ s.t. $c'\colon(s',p',o')$ occurs in $body(r)$ implies that $c'$ is a level-0 context (Otherwise
 level of $c$ would be greater than 1). This means that only contexts from which triples get propagated to $c$
 are level-0 contexts. From (i) we know that all the level-0 contexts are saturated before $i$th iteration, and since during the
 $i$th iteration $R_F$ is applied followed by the $\lclosure()$ operation ($R_I$ need not be applied, since $dChase_{i-1}(QS_{\C})$
is closed w.r.t. $R_I$), $c$ is saturated after iteration $i$, the 1st generating iteration.
 
 (iii) can be obtained from generalization of (i) and (ii), and from the fact that any level-$k$ context
 can only have incoming paths from contexts whose levels are less than or equal to $k$.
 $\qed$
\end{proof}
\begin{figure}[h]
\begin{subfigure}[b]{.44\textwidth}
\centering
\begin{tikzpicture}
[place/.style={circle,draw=black!50,fill=white!20,thick,
inner sep=0pt,minimum size=6mm}
]
\node(c1)[place,,label=above:$*$] at ( 0,6) [place] {$c_1$};
\node(c4)[place] at ( -2,5) [place] {$c_4$};
\node(c2) at (-3,4) [place] {$c_2$};
\node(c3)[place,,label=below:$*$] at ( -1,2) [place] {$c_3$};
\node (c5) [terminal]  at ( -5,4.5){..};
\node (c6) [terminal]  at ( -4.5,4){..};
\node (c7) [terminal]  at ( -3,1){..};
\draw [->] (c4.west) to[out=202.5,in=67.5] node[auto]{} (c2.north);
\draw [->] (c2.east) to[out=22.5,in=247.5] node[auto,swap]{} (c4.south);
\draw [->] (c3.south) to node[auto,swap]{} (c7.east);
\draw [->] (c2.west) to node[auto,swap]{} (c5.east);
\draw [->] (c2.west) to node[auto,swap]{} (c6.east);
\draw [->] (c2.south) to node[auto,swap]{} (c3.west);
\draw [->] (c4.north) to node[auto]{} (c1.west);
\draw [->] (c1.south) to node[auto,swap]{} (c3.north);
\end{tikzpicture}
\caption{}
\label{fig:saturation}
\end{subfigure}
\begin{subfigure}[b]{.44\textwidth}
\centering
\begin{tikzpicture}
[place/.style={circle,draw=black!50,fill=white!20,thick,
inner sep=0pt,minimum size=6mm}
,
special/.style={circle,draw=black!50,fill=gray!20,thick,
inner sep=0pt,minimum size=6mm}
]
\node(c1)[special,,label=above:$*$] at ( 0,6) [place] {$c_1$};
\node(c4)[place] at ( -2,5) [special] {$c_4$};
\node(c2) at (-3,4) [special] {$c_2$};
\node(c3)[place,,label=below:$*$] at ( -1,2) [place] {$c_3$};
\node (c5) [terminal]  at ( -5,4.5) {..};
\node (c6) [terminal]  at ( -4.5,4) {..};
\node (c7) [terminal]  at ( -3,1){..};
\draw [->] (c4.west) to[out=202.5,in=67.5] node[auto]{} (c2.north);
\draw [->] (c2.east) to[out=22.5,in=247.5] node[auto,swap]{} (c4.south);
\draw [->] (c2.west) to node[auto,swap]{} (c5.east);
\draw [->] (c2.west) to node[auto,swap]{} (c6.east);
\draw [->] (c3.south) to node[auto,swap]{} (c7.east);
\draw [->] (c2.south) to node[auto,swap]{} (c3.west);
\draw [->] (c4.north) to node[auto]{} (c1.west);
\draw [->] (c1.south) to node[auto,swap]{} (c3.north);
\end{tikzpicture}
\caption{}
\label{fig:saturation1}
\end{subfigure}
\caption{}
\end{figure}
\begin{example}
 Consider the dependency graph in Fig.~\ref{fig:saturation}, where $..$ indicates part of the graph that is not under the scope
 of our discussion. The TGCs nodes $c_1$ and $c_3$ are marked with a $*$.
  It can be seen that both $c_2$ and $c_4$ are level-0 contexts, since they do not have any incoming paths from TGCs.
 Since the only incoming paths to context $c_1$ are from $c_2$ and $c_4$, which are not TGCs,  
  $c_1$ is a level-1 TGC. Context $c_3$ is a level-2 TGC, since it has an incoming path from the level-1 TGC $c_1$, and
  has no incoming path from a TGC whose level is greater than 1.  Since the level-0 contexts only have 
  incoming paths from level-0 contexts and only appear on the head part of 
 non-generating BRs, before first generating iteration, all the level-0  TGCs  becomes
saturated, as the set of non-generating BRs $R_I$ has been exhaustively applied. 
This situation is reflected in Fig.~\ref{fig:saturation1}, where the saturated nodes are shaded with gray. 
Note that after the first and second generating iterations $c_1$ and $c_3$ also become saturated, respectively.
\end{example}
The following lemma shows that for context acyclic quad-systems, there exists a finite bound on the size and
computation time of its dChase.
\begin{lemma}\label{lemma:contextAcyclicChaseComputationTime}
 For any context acyclic quad-system $QS_{\C}=\langle Q_{\C}, R\rangle$, the following holds: 
 (i) the number of dChase iterations is finite,
 (ii) size of the dChase $\|dChase(QS_{\C})\|$ $=$ $\bigO(2^{2^{\|QS_{\C}\|}})$,
 (iii) computing $dChase(QS_{\C})$ is in \textsc{2EXPTIME}, 
 (iv)  if $R$ and the set of schema triples in $Q_{\C}$ is fixed, then $\|dChase(QS_{\C})\|$ is a polynomial in $\|QS_{\C}\|$, and
computing $dChase(QS_{\C})$ is  in \textsc{PTIME}.
\end{lemma}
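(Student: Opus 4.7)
The plan is to exploit the level decomposition of contexts induced by context acyclicity, combined with the per-iteration size bounds of Lemma~\ref{lemma:chaseSizeIncrease} and the saturation statement of Lemma~\ref{lemma:saturatedContext}. Let $L$ denote the maximum level of any context in $\C$; since every context receives a finite level and the level is bounded by the number of TGCs, we have $L \leq |\C| \leq \|QS_{\C}\|$.

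For (i), I would apply Lemma~\ref{lemma:saturatedContext} to conclude that after the $L$-th generating iteration every context is saturated, so no subsequent BR application can yield new quads and the dChase sequence becomes stationary. Between two consecutive generating iterations only finitely many non-generating iterations can occur, because each non-terminating non-generating step strictly enlarges the dChase while all quads added in these steps use only the finitely many constants already present. For (ii), I would iterate Lemma~\ref{lemma:chaseSizeIncrease}: let $s_0 = \|\lclosure(Q_{\C})\|$, which is polynomial in $\|QS_{\C}\|$. By parts (i) and (ii) of that lemma, after one generating iteration together with its trailing non-generating ones the size is at most the previous size raised to the $\|R\|$-th power; composing this bound across the $L$ generating phases yields an overall bound of $s_0^{\|R\|^L}$. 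With $\|R\|, L \leq \|QS_{\C}\|$, the exponent $\|R\|^L$ is singly exponential in $\|QS_{\C}\|$, so $s_0^{\|R\|^L} \in \bigO(2^{2^{\|QS_{\C}\|}})$.

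For (iii), Lemma~\ref{lemma:chaseSizeIncrease}(iii) gives that each iteration runs in time $\bigO(\|dChase_{k-1}(QS_{\C})\|^{\|R\|})$, which stays within the doubly exponential bound established in (ii); since the number of iterations is at most polynomial in the final size, the overall computation lies in \textsc{2EXPTIME}. For (iv), when $R$ and the schema triples in $Q_{\C}$ are fixed, $\|R\|$ and the number of distinct contexts appearing in $R$ are constants, so $L$ is constant; the bound $s_0^{\|R\|^L}$ then has constant exponent and is therefore polynomial in $s_0$, which in turn is polynomial in $\|QS_{\C}\|$. Likewise each iteration runs in polynomial time, delivering the claimed \textsc{PTIME} data complexity.

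The main obstacle I expect is the careful accounting in part (ii): one must verify that the level structure genuinely prevents quads produced at level $\ell$ from feeding back into generating BRs at levels $\le \ell$, so that the exponent $\|R\|$ is applied only once per level rather than once per iteration, and that the interleaved $\lclosure$ operations and non-generating steps do not inflate the recurrence beyond the claimed doubly exponential bound.
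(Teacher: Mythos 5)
Your proposal is correct and follows essentially the same route as the paper: Lemma~\ref{lemma:saturatedContext} bounds the number of generating iterations by the number of levels (at most $|\C|\leq\|QS_{\C}\|$), Lemma~\ref{lemma:chaseSizeIncrease} bounds each generating-plus-trailing-non-generating phase by raising the size to the power $\|R\|$, giving $\bigO(\|QS_{\C}\|^{\|R\|^{|\C|}})=\bigO(2^{2^{\|QS_{\C}\|}})$, with the iteration count and per-iteration cost then yielding \textsc{2EXPTIME}, and a constant exponent (hence polynomial size and \textsc{PTIME}) when $R$ and the schema are fixed. The concern you flag at the end is exactly what Lemma~\ref{lemma:saturatedContext} settles, so there is no gap.
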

\begin{proof}
(i) Since $QS_{\C}$ is context-acyclic, all the contexts can be partitioned according to their levels. 
Also, the number of levels $k$ is s.t. $k\leq |\C|$. Hence, 
applying lemma ~\ref{lemma:chaseSizeIncrease}, before the $k+1$th generating iteration all the contexts becomes saturated,
and $k+1$th generating iteration do not produce any new quads, terminating the dChase computation process. 

(ii) In the dChase computation process,
since by lemma~\ref{lemma:chaseSizeIncrease}, any generating iteration and a sequence of non-generating iterations
can only increase the
dChase size exponentially in $\|R\|$,  the size of the dChase before $k+1$ th generating iteration
is $\bigO(\|dChase_0(QS_{\C})\|^{{\|R\|}^k})$, which can be written
as $\bigO(\|QS_{\C}\|^{{\|R\|}^k})$ $(\dagger)$.
As seen in (i), there can only be $|\C|$ generating iterations, and a sequence of non-generating
iterations. Hence, applying $k=|\C|$ to  $(\dagger)$, and taking into account the fact that
$|\C| \leq \|QS_{\C}\|$, the size of the dChase
$\|dChase(QS_{\C})\|=\bigO(2^{2^{\|QS_{\C}\|}})$.

(iii) Since in any dChase iteration except the final one, atleast one new quad should be produced and the final dChase
can have at most $\bigO(2^{2^{\|QS_{\C}\|}})$ quads (by ii), 
the total number of iterations are bounded by $\bigO(2^{2^{\|QS_{\C}\|}})$ $(\dagger)$.
Since from lemma ~\ref{lemma:chaseSizeIncrease}, we know that for any iteration $i$, 
computing $dChase_i(QS_{\C})$ is of the order $\bigO(\|dChase_{i-1}(QS_{\C}$ $)\|^{\|R\|})$. Since,
$\|dChase_{i-1}(QS_{\C})\|$ can at most be $\bigO(2^{2^{\|QS_{\C}\|}})$, computing
$dChase_i($ $QS_{\C})$ is of the order $\bigO(2^{\|R\|*2^{\|QS_{\C}\|}}))$. Also since $\|R\|\leq \|QS_{\C}\|$,
any iteration requires $\bigO(2^{2^{\|QS_{\C}\|}})$ time $(\ddagger)$. 
From $(\dagger)$ and $(\ddagger)$, we can conclude that the time required for computing dChase is in 2EXPTIME.

(iv) In (ii) we saw that the size of the dChase before $k+1$th generating iteration is
given by $\bigO(\|QS_{\C}\|^{{\|R\|}^k})$ $(\diamond)$. Since by hypothesis $\|R\|$ is a constant and
also the size of the dependency graph and the levels in it. Hence, the expression  
${{\|R\|}^k}$ in $(\diamond)$ amounts to a constant $z$. Hence, 
$\|dChase(QS_{\C})\|$ $=$ $\bigO(\|QS_{\C}\|^z)$. Hence, the size of $dChase(QS_{\C})$ is a polynomial in $\|QS_{\C}\|$. 

Also, since in any dChase iteration except the final one, atleast one quad should be produced and the final dChase
can have at most $\bigO(\|QS_{\C}\|^z)$ quads, the total number of iterations are bounded by $\bigO(\|QS_{\C}\|^z)$ $(\dagger)$.
Also from lemma ~\ref{lemma:chaseSizeIncrease}, we know that any dChase iteration $i$, 
computing $dChase_i(QS_{\C})$ involves two steps: (a) computing $R(dChase_{i-1}(QS_{\C}))$, and (b) computing $\lclosure()$, which can be done in 
PTIME in the size of its input. 
Since computing $R(dChase_{i-1}(QS_{\C}))$ is of the order $\bigO(\|dChase_{i-1}(QS_{\C})\|^{\|R\|})$, where
$|R|$ is a constant and $\|dChase_{i-1}(QS_{\C})\|$ is a polynomial is $\|QS_{\C}\|$,
each iteration can be done in time polynomial in $\|QS_{\C}\|$ ($\ddagger$).
From $(\dagger)$ and $(\ddagger)$, it can be concluded that dChase can be computed in PTIME. $\qed$
\end{proof}

\begin{lemma}\label{lemma:context-acyclic-computational-properties}
For any context acyclic quad-system, the following holds: 
 (i) data complexity of CCQ entailment is in \textsc{PTIME}
 (ii) combined complexity of CCQ entailment is in \textsc{2EXPTIME}. 
\end{lemma}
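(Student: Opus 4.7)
The plan is to reduce the CCQ entailment problem to a homomorphism search problem into the distributed chase, and then invoke the complexity bounds on the dChase that were established in Lemma~\ref{lemma:contextAcyclicChaseComputationTime}. Recall that, as noted earlier in the excerpt, $dChase(QS_{\C})$ is a universal model of any consistent context-acyclic quad-system, and therefore for every boolean CCQ $CQ()$ we have
\[
QS_{\C} \models CQ() \ \Longleftrightarrow\  \exists\, \mu \colon \var(CQ) \rightarrow \const \text{ s.t. } CQ()[\mu] \subseteq dChase(QS_{\C}).
\]
So answering a (non-boolean) CCQ $CQ(\vec x)$ on a candidate tuple $\vec a$ reduces to (a) materializing $dChase(QS_{\C})$ and (b) searching for a witnessing substitution of the remaining existential variables into the constants appearing in the dChase. (For inconsistent quad-systems, every query is entailed; inconsistency can be detected as a by-product of the closure computation.)

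For part~(ii), the combined complexity, I would first compute $dChase(QS_{\C})$, which by Lemma~\ref{lemma:contextAcyclicChaseComputationTime}(iii) takes time $\bigO(2^{2^{\|QS_{\C}\|}})$ and produces a quad-graph of size at most $\bigO(2^{2^{\|QS_{\C}\|}})$. Let $n$ denote the number of constants occurring in the dChase and let $k = \|\vec y\|$ be the number of existentially quantified query variables; clearly $n \leq \bigO(2^{2^{\|QS_{\C}\|}})$ and $k \leq \|CQ\|$. We then enumerate all at most $n^{k}$ substitutions $\mu$ from the query variables into these constants and, for each one, check in time polynomial in $\|CQ\| + \|dChase(QS_{\C})\|$ whether $CQ(\vec a)[\mu] \subseteq dChase(QS_{\C})$. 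The total cost is bounded by $n^{k}$ times a polynomial in $n$, which is still double exponential in $\|QS_{\C}\| + \|CQ\|$, placing CCQ entailment in \textsc{2EXPTIME}.

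For part~(i), the data complexity, both $R$ and the set of schema triples in $Q_{\C}$ (and the query $CQ(\vec x)$ itself) are considered fixed. By Lemma~\ref{lemma:contextAcyclicChaseComputationTime}(iv) the dChase then has size polynomial in $\|QS_{\C}\|$ and can be computed in \textsc{PTIME}. Since $k = \|\vec y\|$ is now a constant, the number of substitutions to enumerate is $n^{k}$ with $n$ polynomial in $\|QS_{\C}\|$, hence polynomially many; each containment check $CQ(\vec a)[\mu] \subseteq dChase(QS_{\C})$ is again polynomial in the dChase size. The whole procedure therefore runs in \textsc{PTIME} in the size of the data.

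The only subtle step is the bookkeeping for the exponent in the combined case: one must verify that double-exponential-in-$\|QS_{\C}\|$ raised to the power $\|CQ\|$ stays double exponential in the combined input size, which follows from $(2^{2^{\|QS_{\C}\|}})^{\|CQ\|} = 2^{\|CQ\| \cdot 2^{\|QS_{\C}\|}}$ being bounded by $2^{2^{\|QS_{\C}\| + \|CQ\|}}$ up to constants in the exponent. Apart from that, the argument is a straightforward combination of the universal-model characterization of entailment with the dChase size/time bounds already proved. \qed
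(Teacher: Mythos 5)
Your proposal is correct and follows essentially the same route as the paper: both materialize $dChase(QS_{\C})$ using the size and time bounds of Lemma~\ref{lemma:contextAcyclicChaseComputationTime}, then enumerate the at most $\|dChase(QS_{\C})\|^{\|CQ\|}$ groundings of the query variables over the dChase constants and check containment, which stays polynomial for the data-complexity case (fixed $R$, schema, and query) and double exponential for the combined case. The exponent bookkeeping you spell out, $(2^{2^{\|QS_{\C}\|}})^{\|CQ\|} = 2^{\|CQ\|\cdot 2^{\|QS_{\C}\|}}$, is exactly the bound the paper uses implicitly, so there is no gap.
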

\begin{proof}
For a context acyclic quad-system $QS_{\C}=\langle Q_{\C}, R\rangle$, since $dChase(QS_{\C})$ is finite, 
a boolean CCQ $CQ()$ can naively be evaluated by grounding the set of constants in the chase to the variables in the $CQ()$,
and then checking if any of these groundings are contained in $dChase(QS_{\C})$. The number of such 
groundings can at most be $\|dChase(QS_{\C})\|^{\|CQ()\|}$ ($\dagger$).

 (i) Since for data complexity, the size of the BRs $\|R\|$, the set of schema triples, and $\|CQ()\|$ is fixed to constant. 
 From lemma \ref{lemma:contextAcyclicChaseComputationTime} (iv), we know that under the above mentioned settings the dChase can be 
 computed in PTIME and is polynomial in the size of $QS_{\C}$. Since $\|CQ()\|$ is fixed to a constant, and 
 from ($\dagger$), binding the set of constants in $dChase(QS_{\C})$ on $CQ()$ still gives a number of bindings that is
 worst case polynomial in the size of $QS_{\C}$. Since membership of these bindings can checked in the polynomially sized 
 dChase in PTIME, the time required for CCQ evaluation is in PTIME.
 
 (ii) Since in this case $\|dChase(QS_{\C})\|=\bigO(2^{2^{\|QS_{\C}\|}})$ $(\ddagger)$, from ($\dagger$) and $(\ddagger)$,  
 binding the set of constants in $\|dChase(QS_{\C}\|$ to variables in $CQ()$ amounts to $\bigO(2^{\|CQ()\|*2^{\|QS_{\C}\|}})$ 
 bindings. Since the size of dChase is double exponential in $\|QS_{\C}\|$, checking the membership of each of these bindings
 can be done in 2EXPTIME. Hence, the combined complexity is in 2EXPTIME.
 $\qed$
\end{proof}
\begin{theorem}\label{thoeorem:context-acyclic-computational-properties}
 For any context acyclic quad-system, the following holds: 
 (i) The data complexity of CCQ entailment is \textsc{PTIME}-complete, (ii) 
The combined complexity of CCQ entailment is \textsc{2EXPTIME}-complete. 
\end{theorem}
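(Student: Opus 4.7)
The upper bounds for both (i) and (ii) are already supplied by Lemma~\ref{lemma:context-acyclic-computational-properties}, so the proof reduces to establishing matching lower bounds. My plan is therefore to give two reductions, one for each complexity claim, each producing a quad-system whose context dependency graph is trivially acyclic (or at least acyclic through TGCs) so that membership in the context-acyclic class is preserved.

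For the \textsc{PTIME}-hardness of data complexity in (i), I would reduce from a standard \textsc{PTIME}-complete problem such as ground Datalog query answering (equivalently, monotone circuit value or path-system accessibility). The idea is that a fixed set of function-free bridge rules sitting inside a single context $c$ can faithfully simulate any fixed Datalog program: each $k$-ary Datalog predicate $p$ is encoded using a distinguished URI $\mathsf{U}_p$ together with auxiliary URIs representing argument positions, and each Datalog rule becomes a non-generating BR over $c$. Since there are no existentials, no context is a TGC, the context dependency graph has no cycles through TGCs, and the quad-system is vacuously context acyclic. The Datalog facts form $Q_{\C}$, and the goal becomes a boolean CCQ; the reduction is clearly log-space. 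Because the Datalog program (and hence $R$) is fixed, this is indeed a data-complexity reduction.

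For the \textsc{2EXPTIME}-hardness of combined complexity in (ii), the natural route is to reduce from CQ answering under weakly acyclic sets of TGDs, which is known to be \textsc{2EXPTIME}-complete in combined complexity~\cite{Fagin05dataexchange}. The plan is to show that any weakly acyclic TGD program can be compiled into a context-acyclic quad-system of polynomial size such that certain-answer entailment is preserved. Concretely, I would partition the predicates of the TGD program according to their stratum in the weak-acyclicity rank, assign a distinct context identifier to each predicate (or to each stratum), encode each $k$-ary atom $p(\vec t)$ using $k$ triples of the form $c_p\colon(\iota,\mathsf{pos}_j,t_j)$ linked through a fresh individual $\iota$ (introduced by a skolem-style existential bridge rule inside $c_p$), and rewrite each TGD as a BR whose source contexts are the predicate contexts appearing in its body and whose target contexts are those appearing in its head. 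Weak acyclicity of the TGDs then translates directly to absence of cycles through TGCs in the resulting context dependency graph, and chase-equivalence gives the required entailment equivalence.

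The routine piece is (i); the main obstacle is the faithful encoding in (ii). The delicate points are (a) ensuring that the auxiliary triples grouping the $k$ components of one $k$-ary tuple do not spuriously merge tuples generated for different head firings, which forces the existentials to be collected in a dedicated TGC per predicate so that the fresh skolem blank node acts as a tuple identifier, and (b) controlling the dependency graph so that what weak acyclicity permits (existential edges between predicates at different strata) remains acyclic through TGCs after the encoding. Once these are pinned down, chase-preservation and polynomiality of the reduction are straightforward, and combined with the upper bound from Lemma~\ref{lemma:context-acyclic-computational-properties}(ii) this yields \textsc{2EXPTIME}-completeness; similarly, (i) yields \textsc{PTIME}-completeness.
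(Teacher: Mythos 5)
Your part (i) is essentially sound and in the same spirit as the paper (which reduces 3HornSat using a single fixed BR over two contexts $c_t,c_f$, with the Horn clauses encoded as data quads); the only caveat is that your encoding of $k$-ary IDB predicates ``via auxiliary URIs for argument positions'' is not faithful without a tuple identifier, and you cannot create identifiers with existentials without making the (recursive) context a TGC. This is easily repaired by fixing a \textsc{PTIME}-complete problem whose Datalog encoding uses only propositional/low-arity predicates (path systems, 3HornSat), as you yourself suggest, so I regard it as a presentational gap only.

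Part (ii), however, has a genuine gap: weak acyclicity does not transfer to context acyclicity under your encoding. Weak acyclicity is a condition on the \emph{position-level} dependency graph, whereas the context dependency graph of a quad-system is far coarser (one node per context, an edge from every body context to every head context, TGC-marking whenever \emph{any} existential appears in that context's head atoms). Take the weakly acyclic set $\{\,p(x) \rightarrow \exists y\, q(x,y),\ q(x,y) \rightarrow p(x)\,\}$: the only cycle in the position graph, $\langle p,1\rangle \rightarrow \langle q,1\rangle \rightarrow \langle p,1\rangle$, contains no special edge, so the set is weakly acyclic; but with one context per predicate (or per stratum) the encoded quad-system has edges $c_p \rightarrow c_q \rightarrow c_p$ with $c_q$ a TGC (it must be, since that is where the existential tuple identifier or value is created), hence a cycle through a TGC and the quad-system is \emph{not} context acyclic. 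So your reduction does not land inside the class whose hardness you are trying to establish, and 2EXPTIME-hardness does not follow as claimed; you would at least have to verify that the known hardness gadget for weakly acyclic TGDs is stratified in the stronger, predicate-level sense (and note that the 2EXPTIME combined-complexity lower bound is not in \cite{Fagin05dataexchange}, which only gives weak acyclicity and polynomial data complexity). The paper avoids this entirely by a direct simulation of a double-exponentially time-bounded DTM: a chain of contexts $c_0,\dots,c_n$ in which each $c_{i+1}$ is a TGC generating one fresh object per pair of objects of $c_i$ (so $c_n$ holds $2^{2^n}$ tape-cell objects linearly ordered by a successor relation), after which configurations, transitions, inertia and acceptance are encoded by non-generating BRs inside $c_n$; the dependency graph is a chain, so context acyclicity is immediate, and acceptance is read off by the fixed CCQ $\exists y\, c_n\colon(y,\texttt{rdf:type},conInit) \wedge c_n\colon(y,\texttt{rdf:type},Accept)$.
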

For PTIME-hardness of data complexity, it can be shown that the well known
problem of 3HornSat, the satisfiability of propositional Horn formulas with atmost 3 literals, and for 2EXPTIME-hardness for the combined complexity, it can be shown that
the word problem of a double exponentially time bounded deterministic turing machine, which is a well known 2EXPTIME-hard problem,
is reducible to the CCQ entailment problem (see appendix for detailed proof). 

Reconsidering the quad-system in example~\ref{eg:dependency}, which is not context acyclic.
Suppose that the contexts are enabled with RDFS inferencing, i.e $\lclosure()=\textsf{rdfsclosure}()$. 
 During dChase construction, since any application of rule~(\ref{eqn:BRinExample}) can only create
a triple in $c_2$ in which the skolem blank node is in the object position, where as the application of rule~(\ref{eqn:BRinExample2}),
does not propogate constants in object postion to $c_1$.  
Although at a first look, the dChase might seem to
terminate, but since the application of the following RDFS inference rule in $c_2$:
$ (s,p,o)$ $\rightarrow$ $(o$ ,\texttt{rdf:type}, \texttt{rdfs:Resource}),
derives a quad of the form $c_2\colon(\_\text{ :}b$, \texttt{rdf:type}, \texttt{rdfs:Resource}), where $\_\text{ :}b$ is the skolem blank-node
created by the application of rule~(\ref{eqn:BRinExample}). Now by application of rule (\ref{eqn:BRinExample2}) leads to
$c_1\colon(\_\text{ :}b,\texttt{rdf:type}, U_1)$.
Since rule~(\ref{eqn:BRinExample}) is applicable on $c_1\colon(\_\text{ :}b,\texttt{rdf:type}, U_1)$, 
which again brings a new skolem blank node to $c_2$, and hence the dChase construction doesn't terminate.
Hence, as seen above the notion of context acyclicity can alarm us about such infinite cases.

\vspace{-10pt}
\section{Related Work}\label{sec:related work}
\paragraph{Contexts and Distributed Logics} The work on contexts began in the 80s when 
McCarthy~\cite{mccarthy_generality_87} proposed context as a 
solution to the generality problem in AI. 
After this various studies about logics of contexts mainly in the field of KR was done by Guha ~\cite{guha's_thesis}, 
\emph{Distributed First Order Logics} by Ghidini et al.~\cite{Ghidini98distributedfirst} and 
\emph{Local Model Semantics} by Giunchiglia et al.~\cite{Ghidini01localmodels}. 
Primarily in these works contexts are formalized as a first order/propositional
theories, and bridge rules were provided to inter-operate the various contexts. Some of the initial works on contexts
relevant to semantic web were the ones like \emph{Distributed Description Logics}~\cite{DDL} by Borgida et al., 
\emph{E-connections}~\cite{econnections} by Kutz et al.,
 \emph{Context-OWL}~\cite{cowl} by Bouqet et al., and the  work of 
 CKR~\cite{serafini2012contextualized,josephWomo} by Serafini et al. 
These were mainly logics based on DLs, which formalized contexts as OWL KBs, whose semantics is given using a 
distributed interpretation
structure with additional semantic conditions that suits varying requirements. Compared to these works, 
the bridge rules we consider are much more expressive with conjunctions and existential variables that supports 
value/blank-node creation. 

\paragraph{$\forall \exists$ rules, TGDs, Datalog+- rules}
Query answering over rules with universal existential quantifiers
in the context of databases/KR, where these rules are called tuple generating dependencies (TGDs)/Datalog+- rules,
was done by Beeri and Vardi~\cite{BeeriVImplicationProblem81} even in the early 80s, where the authors
show that the query entailment problem in general is undecidable. However, recently many classes of such rules 
have been identified for which query answering is decidable. 
These includes (a) fragments s.t. the resulting models 
have  bounded tree widths, called bounded treewidth sets (BTS), such as Weakly
guarded rules~\cite{Datalog+-}, Frontier guarded rules~\cite{BagetMRT11}, 
(b) fragments called finite unification sets (FUS), such as `sticky' 
rules~\cite{DBLP:conf/rr/CaliGP10,geottlobMannaPieris2014}, and (c) fragments called finite extension sets (FES), 
where sufficient conditions are enforced to ensure finiteness of the chase and its termination. 
The approach used for query answering in FUS is to rewrite the input query w.r.t. to the TGDs
to another query that can be evaluated directly on the set of instances, s.t. the answers for the former query
and latter query coincides. The approach is called the \emph{query rewriting approach}.  
FES classes uses certain termination guarantying tests that check whether certain
sufficient conditions are satisfied by the structure of TGDs.
A large number of classes in FES are based on tests that detects `acyclicity conditions' 
by analyzing the information flow between the TGD rules. \emph{Weak acyclicity}~\cite{Fagin05dataexchange,Deutsch03reformulationof},
was one of the first such notions, and was extended to \emph{joint acyclicity}~\cite{KR11jointacyc},  
\emph{super weak acyclicity}~\cite{Marnette:2009}, and \emph{model faithful acyclicity}~\cite{Bernardo:dlacyclicity:2012}. 
The most similar approach to ours is the weak acyclicity technique, where the 
structure of the rules is analyzed using a dependency graph that models the propagation of constants across various predicates positions, 
and restricting the dependency graph to be acyclic. Although this technique can be used in our scenario by translating a quad-system
to a set of TGDs; if the obtained translation is weakly acyclic, then one could use existing algorithms for chase computation
for the TGDs to compute the chase, the query entailment check can be done by querying the obtained chase.  
However, our approach has the advantage of straightforward implementability on existing quad-stores.

\vspace{-2pt}
\section{Summary and Conclusion}\label{sec:conclusion}
\begin{table*}[t]
\centering
\begin{tabular}{| c | c | c | c |}
\hline
&&&\\
Quad-System & dChase size w.r.t &  Data Complexity of \ & Combined Complexity \\
 Fragment & input quad-system \ & CCQ entailment & of CCQ entailment   \\
\hline
Unrestricted Quad-Systems & Infinite & Undecidable & Undecidable\\ 
Context acylic Quad-Systems & Double exponential  & PTIME-complete  & 2EXPTIME-complete\\ 
\hline
\end{tabular}
\caption{Complexity info for various quad-system fragments} 
\label{tab:compResults}
\end{table*}
\noindent 
In this paper, we study the problem of query answering over contextualized RDF knowledge. We show that the problem
in general is undecidable, and present a decidable class called context acyclic quad-systems. 
Table \ref{tab:compResults} summarizes the main results obtained. 
We can show that the notion of context acyclicity, introduced in section \ref{sec:cAcyclic}
can be used to extend the currently established tools  
for contextual reasoning 
to give support for expressive
BRs with conjuction and existentials with decidability guarantees.
We view the results obtained in this paper as a general foundation 
for contextual reasoning and query answering over contextualized RDF knowledge formats such as Quads, 
and can straightforwardly be used to extend existing Quad stores 
to encorporate for-all existential BRs of the form (\ref{eqn:intraContextualRuleWithQuantifiers}).

\bibliographystyle{splncs03} 
\bibliography{paper}
\appendix
\section{Proofs of Section \ref{sec:Query}}

\begin{proof}[Theorem \ref{theorem:undecidable}]
We show that CCQ entailment is undecidable for
unrestricted quad-systems, by showing that the well known undecidable problem of 
``non-emptiness of intersection of context-free grammars'' is reducible to the CCQ entailment
problem. 

Given an alphabet $\Sigma$,  string $\vec w$ is a sequence of symbols from $\Sigma$. A language $L$ is a 
subset of $\Sigma^*$, where $\Sigma^*$ is the set of all strings that can be constructed
from the alphabet $\Sigma$, and also includes the empty string $\epsilon$. Grammars are machineries that generate a particular language.
A grammar $G$ is a quadruple $\langle V, T, S, P \rangle$, where $V$ is the set of variables, $T$, the set of terminals,
$S \in V$ is the start symbol, and $P$ is a set of production rules (PR), in which each PR $r \in P$, is of the form:
\[
\vec w \rightarrow \vec w'
\]
where $\vec w, \vec w' \in \{T \cup V\}^*$. Intuitively application of a PR $r$ of the form above on a string $ \vec w_1$, 
replaces every occurrence of the sequence $\vec w$ in $\vec w_1$ with $\vec w'$. PRs are applied starting from the 
start symbol $S$ until it results in a string $\vec w$, with $\vec w \in \Sigma^*$ or no more production 
rules can be applied on $\vec w$. In the former case, we say that $\vec w \in L(G)$, the language generated by grammar $G$. For
a detailed review of grammars, we refer the reader to Harrison et al. ~\cite{Harrison-formal-language}.
A \emph{context-free grammar} (CFG) is a grammar, whose set of PRs $P$, have the following property: 
\begin{property}\label{prop:CFG}
For a CFG, every PR is of the form $v \rightarrow \vec w$, where $v \in V$, $\vec w \in \{T \cup V\}^*$.
\end{property}
 Given two CFGs, $G_1=\langle V_1, T, S_1, P_1 \rangle$ and $G_2=\langle V_2, T, S_2, P_2 \rangle$, 
where $V_1, V_2$ are the set of variables, 
$T$ s.t. $T \cap (V_1 \cup V_2)=\emptyset$ is the set of terminals. $S_1 \in V_1$ is the start symbol of $G_1$,
and $P_1$ are the set of PRs of the form $v \rightarrow \vec w$, where $v \in V$, $\vec w$ is a sequence of the form
$w_1...w_n$, where $w_i \in V_1 \cup T$. Similarly $s_2, P_2$ is defined.  Deciding whether the language
generated by the grammars $L(G_1)$ and $L(G_2)$ have non-empty intersection is known 
to be undecidable~\cite{Harrison-formal-language}.

Given two CFGs, $G_1=\langle V_1, T, S_1, P_1 \rangle$ and $G_2=\langle V_2, T, S_2, P_2 \rangle$,
we encode grammars $G_1, G_2$ into a quad-system of the form $QS_c=\langle Q_{c},R\rangle$, with a single context identifier $c$.
Each PR $r= v \rightarrow \vec w \in P_1 \cup P_2$, with $\vec w=w_1w_2w_3..w_n$, is encoded as a BR of the form:
\begin{eqnarray}\label{eqn:PRtoICR}
c\colon(x_1,w_1,x_2), c\colon(x_2, w_2, x_3),...,c\colon(x_n,w_n,x_{n+1}) 
\rightarrow c\colon(x_1,v,x_{n+1}) 
\end{eqnarray}
where $x_1,..,x_{n+1}$ are variables. W.l.o.g. we assume that the set of terminal symbols $T$ is
equal to the set of terminal symbols occurring in $P_1 \cup P_2$. For each terminal symbol $t_i \in T$, $R$ contains
a BR of the form:
\begin{eqnarray}\label{eqn:existential rule}
 c\colon(x, \texttt{rdf:type}, C) \rightarrow \exists y \ c\colon(x,t_i,y), 
 c\colon(y, \texttt{rdf:type},C) 
\end{eqnarray}
and $Q_c$ contains only the triple:
\[
 c\colon(a,\texttt{rdf:type},C)
\]
We in the following show that:
\[
QS_c \models \exists y \ c\colon(a , S_1, y) \wedge c\colon(a, S_2, y) \Leftrightarrow L(G_1)\cap L(G_2)\neq\emptyset
\]
\begin{claim}(1)
 For any $\vec w=t_1,...,t_p \in T^*$, there exists $b_1,...b_p$, s.t. $c\colon(a,t_1,b_1)$, $c\colon(b_1,t_2,b_2)$, ..., 
 $c\colon(b_{p-1}, t_p, b_p)$, $c\colon(b_p,\texttt{rdf:type},C)$ $\in$ $dChase($ $QS_c)$.
\end{claim}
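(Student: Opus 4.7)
The plan is to prove Claim (1) by induction on the length $p$ of the string $\vec w = t_1 \dots t_p \in T^*$.

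For the base case, when $p = 0$ (so $\vec w$ is the empty string $\epsilon$), the claim degenerates to requiring that $c\colon(a, \mathtt{rdf{:}type}, C) \in dChase(QS_c)$. This is immediate since $c\colon(a, \mathtt{rdf{:}type}, C) \in Q_c$ and $dChase_0(QS_c) = \lclosure(Q_c) \supseteq Q_c$, and the dChase only grows monotonically across iterations.

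For the inductive step, suppose the claim holds for all strings of length $p$, and let $\vec w = t_1 \dots t_p t_{p+1}$ be an arbitrary string of length $p+1$ over $T$. Apply the induction hypothesis to the prefix $\vec w' = t_1 \dots t_p$: this yields constants $b_1, \dots, b_p$ such that the chain $c\colon(a, t_1, b_1), c\colon(b_1, t_2, b_2), \dots, c\colon(b_{p-1}, t_p, b_p)$ together with the ``cap'' quad $c\colon(b_p, \mathtt{rdf{:}type}, C)$ all lie in $dChase(QS_c)$. Now I invoke the BR of the form~(\ref{eqn:existential rule}) associated with the terminal $t_{p+1} \in T$, namely
\[
c\colon(x, \mathtt{rdf{:}type}, C) \rightarrow \exists y\ c\colon(x, t_{p+1}, y),\ c\colon(y, \mathtt{rdf{:}type}, C),
\]
instantiated via $x \mapsto b_p$. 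Because its body $c\colon(b_p, \mathtt{rdf{:}type}, C)$ sits in $dChase(QS_c)$, applying the skolemized form of this rule introduces a fresh skolem blank node $b_{p+1} := f(b_p)$ and adds both $c\colon(b_p, t_{p+1}, b_{p+1})$ and $c\colon(b_{p+1}, \mathtt{rdf{:}type}, C)$ to $dChase(QS_c)$. Concatenating these with the chain from the induction hypothesis yields the required witness sequence of length $p+1$ with the cap quad at the end.

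The argument is mostly bookkeeping; the only point that requires a bit of care is justifying that the extended chain really appears in $dChase(QS_c)$ rather than merely in some later iteration that might not exist. This follows from the definition $dChase(QS_c) = \bigcup_{i \in \mathbb{N}} dChase_i(QS_c)$: the induction hypothesis supplies an index $i$ at which the length-$p$ chain is present, and the generating rule for $t_{p+1}$ is guaranteed to fire at some subsequent generating iteration, so the extended chain lies in $dChase_j(QS_c)$ for some $j > i$, hence in $dChase(QS_c)$. No obstacle of substance arises; the claim is essentially the statement that the chase faithfully simulates arbitrary terminal-string expansions, which is exactly what the generating BRs (\ref{eqn:existential rule}) were designed to encode.
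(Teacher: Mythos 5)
Your proof is correct and follows essentially the same route as the paper's: induction on the length of $\vec w$, extending the chain by applying the skolemized BR of the form~(\ref{eqn:existential rule}) to the cap quad $c\colon(b_p,\texttt{rdf:type},C)$ supplied by the induction hypothesis. The only cosmetic difference is that you start the induction at the empty string (with cap $c\colon(a,\texttt{rdf:type},C)$ from $dChase_0$), whereas the paper starts at length one; both are fine.
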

we proceed by induction on the $\|\vec w\|$.
\begin{description}
 \item [base case] suppose if $\|\vec w\|=1$, then $\vec w=t_i$, for some $t_i \in T$. But
 Since by construction $c\colon(a$, \texttt{rdf:type}, $C)$ $\in$ $dChase_0(QS_c)$, on which rules of the 
 form (\ref{eqn:existential rule}) is applicable. Hence, there exists an $i$ s.t.
 $dChase_i(QS_c)$ contains $c\colon(a,t_i,b_i)$, $c\colon(b_i,\texttt{rdf:type},C)$, for each $t_i \in T$. 
 Hence, the base case.
 \item [hypothesis] for any $\vec w=t_1...t_p$, if $\|\vec w\|\leq p'$, then there exists $b_1,...,b_p$, s.t. 
 $c\colon(a,t_1,b_1)$, $c\colon(b_1,t_2,b_2)$, ..., $c\colon(b_{p-1}, t_p, b_p)$, $c\colon(b_p$, \\
 \texttt{rdf:type}, $C)$ $\in$ $dChase(QS_c)$.
 \item [inductive step] suppose $\vec w=t_1...t_{p+1}$, with $\|\vec w\|\leq p'+1$. Since $\vec w$ can be written
 as $\vec{w'}t_{p+1}$, where $\vec w'=t_1...t_p$, and by hypothesis, there exists $b_1,...,b_p$ such 
 that  $c\colon(a,t_1,b_1)$, $c\colon(b_1,t_2,b_2)$, $...$, $c\colon(b_{p-1}, t_p, b_p)$, $c\colon(b_p,\texttt{rdf:type},C)$ $\in$ 
 $dChase(QS_c)$. Also since rules of the form (\ref{eqn:existential rule}) are applicable on $c\colon(b_p$, \texttt{rdf:type}, $C)$,
 and hence produces triples of the form $c\colon(b_p,t_i,b_{p+1}^i)$, $c\colon(b_{p+1}^i)$, \texttt{rdf:type}, $C)$, 
 for each $t_i \in T$. Since $t_{p+1} \in T$, the claim follows.
\end{description}
 For a grammar $G=\langle V,T,S,P\rangle$, whose start symbol is $S$, and for any $\vec w \in \{V \cup T\}^*$,
 for some $V_j \in V$, we denote by $V_j \rightarrow^i \vec w$, 
 the fact that $\vec w$ was derived from $V_j$ by $i$ production steps, i.e. there exists steps 
 $V_j \rightarrow r_1, ...,r_i \rightarrow \vec w$, which lead to the production of $\vec w$. 
For any $\vec w$, $\vec w \in L(G)$, iff 
there exists an $i$ s.t. $S \rightarrow^i \vec w$. For any $V_j \in V$, we use $V_j \rightarrow^* \vec w$ to denote
the fact that there exists an arbitrary $i$, s.t. $V_j \rightarrow^i \vec w$.
\begin{claim}(2)
 For any $\vec w=t_1...t_p \in \{V \cup T\}^*$, and for any $V_j \in V$, if $V_j \rightarrow^* \vec w$
and there exists $b_1,...,b_{p+1}$, with $c\colon(b_1,t_1,b_2), ..., c\colon(b_p,t_p,b_{p+1}) \in dChase(QS_c)$,
then $c\colon(b_1,V_j,b_{p+1}) \in dChase(QS_c)$. 
\end{claim}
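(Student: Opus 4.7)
The plan is to prove Claim (2) by induction on the length $i$ of the derivation $V_j \rightarrow^i \vec w$, leveraging the correspondence between production rules of $G_1 \cup G_2$ and the BRs of the form (\ref{eqn:PRtoICR}) in $R$.

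For the base case $i = 1$, we have $V_j \rightarrow \vec w$ directly, so $V_j \rightarrow \vec w = t_1 \dots t_p$ is a production rule in $P_1 \cup P_2$. By the encoding, $R$ contains the BR
\[
c\colon(x_1, t_1, x_2), \dots, c\colon(x_p, t_p, x_{p+1}) \rightarrow c\colon(x_1, V_j, x_{p+1}).
\]
Applying this BR with the substitution $x_k / b_k$ for $k = 1, \dots, p+1$ directly yields $c\colon(b_1, V_j, b_{p+1}) \in dChase(QS_c)$.

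For the inductive step, assume the claim holds for all derivations of length $\leq i$, and suppose $V_j \rightarrow^{i+1} \vec w$. Peel off the first production step: there is a rule $V_j \rightarrow u_1 u_2 \dots u_q$ in $P_1 \cup P_2$ such that $\vec w$ is obtained by deriving each $u_k \in V \cup T$ in at most $i$ further steps into some (possibly empty or single-symbol) subsequence $\vec w_k$ of terminals and variables, with $\vec w = \vec w_1 \vec w_2 \cdots \vec w_q$. This partitions the terminal/variable positions $t_1, \dots, t_p$ and thereby the indices $b_1, \dots, b_{p+1}$ into contiguous blocks; write block $k$ as spanning indices $b_{\alpha_k}, \dots, b_{\alpha_{k+1}}$, where $\alpha_1 = 1$ and $\alpha_{q+1} = p+1$. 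The main bookkeeping step is then to establish, for each $k$, that $c\colon(b_{\alpha_k}, u_k, b_{\alpha_{k+1}}) \in dChase(QS_c)$: if $u_k$ is a terminal, the block reduces to a single given triple; if $u_k$ is a variable deriving $\vec w_k$ in at most $i$ steps, this is precisely the inductive hypothesis applied to the subsequence $\vec w_k$ with the corresponding subchain of triples.

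Having assembled these $q$ triples, the BR encoding the rule $V_j \rightarrow u_1 \dots u_q$, namely
\[
c\colon(x_1, u_1, x_2), \dots, c\colon(x_q, u_q, x_{q+1}) \rightarrow c\colon(x_1, V_j, x_{q+1}),
\]
is applicable with $x_k / b_{\alpha_k}$, producing $c\colon(b_1, V_j, b_{p+1}) \in dChase(QS_c)$, which completes the induction. The principal obstacle is the careful alignment of the block partition of $\vec w$ with the chain of triples $c\colon(b_i, t_i, b_{i+1})$ so that the endpoints match up to invoke the BR; once this alignment is stated precisely, both the invocation of the inductive hypothesis on each block and the final BR application are routine.
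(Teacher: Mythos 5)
Your proof is correct, but it takes a genuinely different route from the paper's. You induct on the length of the derivation $V_j \rightarrow^{i} \vec w$, peel off the first production $V_j \rightarrow u_1\dots u_q$, and use the standard decomposition of context-free derivations: the remaining steps split among the symbols $u_k$, each deriving a contiguous block $\vec w_k$ of $\vec w$, so each block's endpoint pair $(b_{\alpha_k}, b_{\alpha_{k+1}})$ gets the triple $c\colon(b_{\alpha_k},u_k,b_{\alpha_{k+1}})$ either directly (terminal, or a variable left unrewritten) or by the inductive hypothesis, after which the BR of the form (\ref{eqn:PRtoICR}) for $V_j \rightarrow u_1\dots u_q$ fires. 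The paper instead inducts on the length of the string $\vec w$ and argues by cases on the shape of the derivation: a one-step derivation, a unit-production chain $V_j \rightarrow^{*} V_k \rightarrow^{1} \vec w$ handled by chaining the corresponding unit BRs, or a factorization through some $V_k$ deriving a substring of length at least $2$, which strictly shortens the string and re-enters the induction. Your approach buys a cleaner argument: unit productions need no special case (they are just $q=1$), and you avoid the paper's somewhat delicate claim that a suitable ``length $\geq 2$'' factorization always exists. What it costs is that you must explicitly invoke the block-decomposition property of context-free derivations (standard, but worth stating), and you should drop the ``possibly empty'' blocks: an $\epsilon$-production would leave you needing a reflexive triple $c\colon(b_{\alpha_k},u_k,b_{\alpha_k})$ that the encoding never produces, so, like the paper, you must assume (w.l.o.g.\ for the intersection-emptiness reduction) that the grammars are $\epsilon$-free, which also keeps every block nonempty and the endpoint bookkeeping sound.
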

We prove this by induction on the size of $\vec w$.
\begin{description}
 \item[base case] Suppose $\|\vec w\|=1$, then $\vec w=t_k$, for some $t_k \in T$. If there exists
 $b_1,b_2$ s.t. $c\colon(b_1,t_k,b_2)$. But since there exists a PR $V_j \rightarrow t_k$,  by 
 transformation given in (\ref{eqn:PRtoICR}), there exists a BR $c\colon(x_1,t_k,x_2) \rightarrow c\colon(x_1,V_j,x_2)
 \in R$, which is applicable on $c\colon(b_1,t_k,b_2)$ and hence the quad $c\colon(b_1,V_j,b_2) \in dChase(QS_c)$.
 \item[hypothesis] For any $\vec w=t_1...t_p$, with 
 $\|\vec w\|\leq p'$, and for any $V_j \in V$, if $V_j \rightarrow^* \vec w$
and there exists $b_1,...b_p,b_{p+1}$, s.t. $c\colon(b_1,t_1,b_2)$, $...$, $c\colon(b_p,t_p,b_{p+1})$ $\in$ $dChase(QS_c)$,
then $c\colon(b_1$, $V_j$, $b_{p+1})$ $\in$ $dChase(QS_c)$.
 \item[inductive step] Suppose if $\vec w=t_1...t_{p+1}$, with $\|\vec w\|\leq p'+1$, and  $V_j \rightarrow^i \vec w$, 
and there exists $b_1,...b_{p+1}$, $b_{p+2}$, s.t. $c\colon(b_1,t_1,b_2)$, $...$, 
$c\colon(b_{p+1}$, $t_{p+1}$, $b_{p+2})$ 
 $\in$ $dChase(Q_c)$.  Also, one of the following holds (i) $i=1$, or (ii) $i>1$. Suppose (i) is the case, 
 then it is trivially the case that $c\colon(b_1,V_j,b_{p+2}) \in dChase(QS_c)$. Suppose if (ii) is the case,
 one of the two sub cases holds (a) $V_j \rightarrow^{i-1} V_k$, for some $V_k \in V$ and $V_k \rightarrow^1 \vec w$ or (b)
   there exist a $V_k \in V$, s.t. $V_k \rightarrow^*  t_{q+1}...t_{q+l}$, with $2\leq l \leq p$, where
 $V_j \rightarrow^* t_1...t_qV_kt_{p-l+1}...t_{p+1}$. If (a) is the case, trivially
 then $c\colon(b_1,V_k,b_{q+2}) \in dChase(QS_c)$, and since by construction there exists $c\colon(x_0, V_k, x_1)$ $\rightarrow$ 
 $c\colon(x_0,V_{k+1},x_1)$, $...$, $c\colon(x_0,V_{k+i},x_1)$ $\rightarrow$ $c\colon(x_0,V_j,x_1)$ $\in$ $R$,
$c\colon(b_1,V_j,b_{q+2}) \in dChase($ $QS_c)$.
If (b) is the case, then since $\|t_{q+1}...t_{q+l}\|\geq 2$, 
 $\|t_1...t_qV_2t_{p-l+1}...t_{p+1}\|\leq p'$. This implies that  
 $c\colon(b_1,V_j,b_{p+2}) \in dChase(QS_c)$.
\end{description}
Similarly, by construction of $dChase(QS_c)$, the following claim can straightforwardly be shown to hold:
\begin{claim}(3)
 For any $\vec w=t_1...t_p \in \{V \cup T\}^*$, and for any $V_j \in V$, if there exists 
 $b_1,...,b_p,b_{p+1}$, with $c\colon(b_1,t_1,b_2)$, ..., $c\colon(b_p,t_p,b_{p+1})$ $\in$ $dChase(QS_c)$ 
 and $c\colon(b_1$, $V_j$, $b_{p+1})$ $\in$ $dChase(QS_c)$, then $V_j \rightarrow^* \vec w$. 
\end{claim}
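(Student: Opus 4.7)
I would prove Claim (3) by strong induction on the iteration $i \in \mathbb{N}$ at which the quad $c\colon(b_1, V_j, b_{p+1})$ first appears in $dChase_i(QS_c)$. The base case is vacuous: $dChase_0(QS_c) = \{c\colon(a, \texttt{rdf:type}, C)\}$ contains no quad whose predicate is a grammar variable $V_j$, so the antecedent fails.

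For the inductive step, I would first observe that the only BRs whose head mentions a grammar variable $V_j$ in predicate position are the encoding BRs of form~(\ref{eqn:PRtoICR}), since the existential BRs~(\ref{eqn:existential rule}) produce only quads whose predicate is a terminal $t_i$ or \texttt{rdf:type}. Hence the derivation of $c\colon(b_1, V_j, b_{p+1})$ at iteration $i$ must come from some production rule $V_j \rightarrow w_1 \ldots w_n \in P_1 \cup P_2$ together with witnesses $a_1 = b_1, a_2, \ldots, a_n, a_{n+1} = b_{p+1}$ such that $c\colon(a_k, w_k, a_{k+1}) \in dChase_{i-1}(QS_c)$ for each $k \in \{1,\ldots,n\}$.

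To relate the witnessing $w$-chain to the given $t$-chain, I would establish an auxiliary structural property: because skolemization makes each existential rule~(\ref{eqn:existential rule}) for terminal $t$ produce the unique successor $f_t(b)$ per input $b$, the set of terminal-labeled quads in $dChase(QS_c)$ forms a tree rooted at $a$. Consequently any chain of terminal quads from a fixed node to another fixed node is unique. A secondary induction on the iteration then shows that whenever $c\colon(b, V, b') \in dChase(QS_c)$ for $V \in V_1 \cup V_2$, the node $b'$ is a descendant of $b$ in this tree, and any mixed chain from $b$ to $b'$ (interleaving variable and terminal edges) projects to the same unique terminal suffix of the tree path. This yields a unique partition of the $t$-chain into consecutive sub-chains $\vec u_1, \ldots, \vec u_n$ where $\vec u_k$ runs from $a_k$ to $a_{k+1}$.

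To finish, for each $k$: if $w_k \in T$ then $\vec u_k = w_k$ (a single terminal edge, by uniqueness); if $w_k \in V$ then by the main induction hypothesis applied to $c\colon(a_k, w_k, a_{k+1}) \in dChase_{i-1}(QS_c)$ together with the sub-chain $\vec u_k$, we obtain $w_k \rightarrow^* \vec u_k$. Concatenating these derivations gives $V_j \rightarrow w_1 \ldots w_n \rightarrow^* \vec u_1 \cdots \vec u_n = \vec w$, as required. The main obstacle is the tree-structure lemma and the alignment argument: one must show carefully, using determinism of skolem functions, that terminal edges are unique and variable edges only connect ancestor-descendant pairs whose terminal suffix matches the variable's derivation; once this is in place, the induction itself is routine.
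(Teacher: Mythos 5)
Your induction-on-iterations skeleton is the natural formal route here (the paper itself gives no real proof of Claim (3), dismissing it as following ``straightforwardly by construction of $dChase(QS_c)$''), but your proposal has a genuine gap at the alignment step, and that gap cannot be repaired, because Claim (3) in the full generality in which it is stated --- and in which you try to prove it, with $\vec w \in \{V \cup T\}^*$ --- is false. The trouble is exactly where you pass from the witness chain $a_1=b_1, a_2, \ldots, a_{n+1}=b_{p+1}$ (obtained from the rule of form~(\ref{eqn:PRtoICR}) that fired the head quad) to a ``unique partition'' of the given chain into sub-chains $\vec u_k$ from $a_k$ to $a_{k+1}$. First, such a partition need not exist: a variable-labelled edge of the given chain may jump over a witness node $a_k$, which then lies strictly inside the tree path spanned by that edge, so $a_k$ equals no node of the given chain. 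Second, even when the partition exists, your step ``if $w_k \in T$ then $\vec u_k = w_k$ by uniqueness'' is wrong: uniqueness holds only among chains of \emph{terminal} edges, whereas the given sub-chain from $a_k$ to $a_{k+1}$ may be a variable edge with the same endpoints. Concretely, take $P_1 = \{S_1 \rightarrow t_1 t_2,\ A \rightarrow t_1\}$. The dChase contains the tree edges $c\colon(a,t_1,b_1)$, $c\colon(b_1,t_2,b_2)$ and the variable edges $c\colon(a,A,b_1)$ and $c\colon(a,S_1,b_2)$. With $\vec w = A\,t_2$ the antecedent of Claim (3) holds, yet $S_1 \rightarrow^* A\,t_2$ fails: the only derivation from $S_1$ is $S_1 \Rightarrow t_1 t_2$, and productions cannot be run backwards to fold $t_1$ into $A$. (Similarly, adding $B \rightarrow t_1 t_2$ to $P_1$, the single-edge chain $c\colon(a,B,b_2)$ together with $c\colon(a,S_1,b_2)$ would demand $S_1 \rightarrow^* B$; this also exhibits the failure of your partition, since the witness node $b_1$ does not occur on the given chain.) What the dChase construction actually guarantees is only that the head variable and the mixed chain derive a \emph{common terminal word} --- not that one derives the other.

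Both the statement and your proof are saved by restricting $\vec w$ to $T^*$, which is the only instance the paper ever uses (part (b) of the proof of Theorem~\ref{theorem:undecidable} invokes Claim (3) with $\vec w \in T^*$). Under that restriction your argument goes through cleanly if you reorganize it so that no chain alignment is needed: prove, by your strong induction on the first iteration at which a variable edge appears, the statement ``for every $c\colon(b,V,b') \in dChase(QS_c)$ with $V \in V_1 \cup V_2$, the node $b'$ is a tree-descendant of $b$ and $V \rightarrow^* \pi(b,b')$'', where $\pi(b,b')$ is the word spelled by the unique path of terminal edges from $b$ to $b'$; your tree-structure lemma is correct, since the skolemized rules~(\ref{eqn:existential rule}) give each node exactly one $t$-successor per terminal $t$, and rules of form~(\ref{eqn:PRtoICR}) never produce terminal-labelled quads. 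In the inductive step, each witness edge $c\colon(a_k,w_k,a_{k+1})$ with $w_k \in T$ is itself a tree edge, and each one with $w_k \in V$ gives $w_k \rightarrow^* \pi(a_k,a_{k+1})$ by the induction hypothesis, so concatenation yields $V_j \rightarrow w_1\cdots w_n \rightarrow^* \pi(b_1,b_{p+1})$. Claim (3) for terminal $\vec w$ then follows at once, because a chain of terminal edges from $b_1$ to $b_{p+1}$ \emph{is} the tree path, i.e.\ $\vec w = \pi(b_1,b_{p+1})$; this restricted form suffices for the reduction.
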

(a) For any $\vec w=t_1...t_p \in T^*$, if $\vec w \in L(G_1) \cap L(G_2)$, then
by claim 1, since there exists $b_1,...,b_p$, s.t. $c\colon(a,t_1,b_1),...,c\colon(b_{p-1},t_p,b_p) \in dChase(QS_c)$. But
since $\vec w \in L(G_1)$ and $\vec w \in L(G_2)$, $S_1 \rightarrow \vec w$ and $S_2 \rightarrow \vec w$. Hence 
by claim 2, $c\colon(a,S_1,b_p),c\colon(a,S_2,b_p)$ $\in$ $dChase(QS_c)$, which implies that $dChase(QS_c)$ $\models$ 
$\exists y$  $c\colon(a,s_1,y)$ $\wedge$ $c\colon(a,s_2,y)$. Hence, $QS_c$ $\models$  
$\exists y$ $c\colon(a,s_1,y)$ $\wedge$ $c\colon(a,s_2,y)$.  \\
(b) Suppose if $QS_c \models  \exists y \ c\colon(a,S_1,y) \wedge c\colon(a,S_2,y)$, then this implies that 
there exists $b_p$ s.t. $c\colon(a$, $S_1$, $b_p)$, $c\colon(a,S_2,b_p) \in dChase(QS_{\C})$. 
Then it is the case that there exists $\vec w=t_1...t_p \in T^*$, and
$b_1,...,b_p$ s.t. $c\colon(a,t_1,b_1)$, ..., $c\colon(b_{p-1},t_p,b_p)$, $c\colon(a,S_1,b_p)$, 
$c\colon(a,S_2,b_p)$ $\in$ $dChase(QS_c)$. Then by claim 3,
$S_1 \rightarrow^* \vec w$, $S_2 \rightarrow^* \vec w$. Hence, $w \in L(G_1) \cap L(G_2)$. 

By (a),(b) it follows that there exists $\vec w \in L(G_1) \cap L(G_2)$ 
iff $QS_c \models \exists y \ c\colon(a,s_1,y) \wedge c\colon(a,s_2,y)$.
As we have shown that the intersection of CFGs, which is an undecidable problem, is reducible to the problem of
query entailment on unrestricted quad-system, the latter is undecidable.
\end{proof}

\section{Proofs of Section \ref{sec:cAcyclic}}\label{AppendixSec:csafe}

\begin{proof}[Theorem \ref{thoeorem:context-acyclic-computational-properties}]
 
 (i) (Membership) See lemma \ref{lemma:context-acyclic-computational-properties} for the membership in PTIME.
 
 (Hardness)
In order to prove P-hardness, we reduce a well known P-complete problem, 3HornSat, i.e. the satisfiability of 
propositional Horn formulas with at most 3 literals. Note that a (propositional) Horn formula is formula of the form:
\begin{eqnarray}\label{eqn:PropHornFormula}
 P_1 \wedge \ldots \wedge P_n \rightarrow P_{n+1}  
\end{eqnarray}
where $P_i$, for  $1 \leq i \leq n+1$, are either propositional variables or constants $t$, $f$, that represents
true and false, respectively. Note that for any propositional variable $P$, the fact that ``$P$ holds'' is 
represented by the formula $t \rightarrow P$, and ``$P$ does not hold'' is 
represented by the formula $P \rightarrow f$.
A 3Horn formula is a formula of the form~(\ref{eqn:PropHornFormula}), where $1\leq n\leq 2$. Note that any (set of) 
Horn formula(s) $\Phi$ can be transformed in polynomial time to a polynomially sized set $\Phi'$ of 3Horn formulas,
by introducing auxiliary propositional variables s.t. $\Phi$ is satisfiable iff $\Phi'$ is satisfiable.
A pure 3Horn formula is 3Horn formula of the form~\ref{eqn:PropHornFormula}, where $n=2$. Any 3Horn formula $\phi$ that is not pure
can be trivially converted to equivalent pure form by appending a $\wedge \ t$ on the head part of $\phi$. For instance, 
 $P\rightarrow Q$,  can be converted to  $P \wedge t \rightarrow Q$. Hence, w.l.o.g. we assume that any set of 3Horn formulas is pure, and is of the form:
 \begin{eqnarray}\label{eqn:PropPure3HornFormula}
  P_1 \wedge P_2 \rightarrow P_3  
 \end{eqnarray}
 We, in the following, reduce the satisfiability problem of pure 3Horn formulas to CCQ 
entailment problem over a quad-system whose set of schema triples, the set of BRs, and the CCQ $CQ$ are all fixed. 

 For any set of pure Horn formulas $\Phi$, we construct the quad-system $QS_{\C}=\langle Q_{\C}, R\rangle$, where
 $\C=\{c_t, c_f\}$. For any formula $\phi \in \Phi$ of the form (\ref{eqn:PropPure3HornFormula}), $Q_{\C}$ contains
 a quad $c_f \colon (P_1, P_2, P_3)$. In addition $Q_{\C}$ contains a quad $c_t\colon (t$, \texttt{rdf:type}, $T)$. 
 $R$ is the singleton that contains only the following fixed BR:
 \begin{eqnarray}
  c_t\colon (x_1, \texttt{rdf:type}, T), c_t\colon (x_2,  \texttt{rdf:type}, T), c_f\colon (x_1, x_2, x_3) \rightarrow c_t\colon (x_3,  \nonumber \\
  \texttt{rdf:type}, T) \nonumber 
 \end{eqnarray}
Let the $CQ$ be the fixed query $c_t(f, \texttt{rdf:type}, T)$. 

Now, it is easy to see that $QS_{\C}$ $\models$ $CQ$, iff $\Phi$ is not satisfiable. 

 (ii) 
 (Membership) See lemma \ref{lemma:context-acyclic-computational-properties}.
 
 (Hardness) See following heading.
 $\qed$
\end{proof}

\subsubsection{2EXPTIME-Hardness of CCQ Entailment}
In this subsection, we show that the combined complexity of the decision problem of CCQ entailment for context 
acyclic quad-systems is 2EXPTIME-hard. We show this
by reduction of the word-problem of a 2EXPTIME deterministic turing machine (DTM) 
to the CCQ entailment problem. A DTM $M$ is a tuple $M=\langle Q, \Sigma, \Delta, q_0, q_A\rangle$, where 
\begin{itemize}
 \item $Q$ is a set of states,
 \item $\Sigma$ is a finite alphabet that includes the blank symbol $\Box$,
 \item $\Delta\colon (Q \times \Sigma) \rightarrow (Q \times \Sigma \times \{+1, -1\})$ is the transition function,
 \item $q_0 \in Q$ is the initial state.
 \item $q_A \in Q$ is the accepting state.
\end{itemize}
 W.l.o.g. we assume that there exists exactly one accepting state, which is also a halting state. 
A  configuration is a word $\vec \alpha \in \Sigma^*Q\Sigma^*$. A
configuration $\vec \alpha_2$ is a successor of the configuration $\vec \alpha_1$, iff one of the following holds:
\begin{enumerate}
 \item $\vec \alpha_1=\vec w_lq\sigma\sigma_r \vec w_r$ and $\vec \alpha_2=\vec w_l\sigma'q'\sigma_r \vec w_r$, 
 if $\Delta(q,\sigma) =(q',\sigma',R) $, or
  \item $\vec \alpha_1=\vec w_lq\sigma$ and $\vec \alpha_2=\vec w_l\sigma'q'\Box$, 
  if $ \Delta(q,\sigma)=(q',\sigma',R)$, or
  \item $\vec \alpha_1=\vec w_l\sigma_lq\sigma \vec w_r$ and $\vec \alpha_2=\vec w_lq'\sigma_l\sigma' \vec w_r$, 
  if $ \Delta(q,\sigma)=(q',\sigma',L)$.
\end{enumerate}
where $q,q' \in Q$, $\sigma,\sigma',\sigma_l, \sigma_r \in \Sigma$, and $\vec w_l, \vec w_r\in \Sigma^*$. 
Since number of configurations can at most be doubly exponential in the size of the input string,  
the number of tape cells traversed by the DTM tape head is also bounded double exponentially. A configuration $\vec c=\vec w_lq\vec w_r$ is an accepting configuration iff
$q=q_A$. A language $L \subseteq \Sigma^*$ is accepted by a 2EXPTIME bounded DTM $M$, iff 
for every $\vec w \in L$, $M$ accepts $\vec w$ in time $\bigO(2^{2^{\|\vec w\|}})$. 

\subsubsection{Simulating DTMs using Context Acyclic Quad-Systems}\label{subsection:ATM simulation}
 Consider a DTM $M=\langle Q,\Sigma, \Delta, q_0, q_A\rangle$, and a string $\vec w$, with $\|\vec w\|=n$. Since the 
number of storage cells is double exponentially bounded, we first construct a quad-system $QS^M_{\C}=\langle Q^M_{\C}, R\rangle$,
where  $\C=\{c_1,...,c_n\}$, with $n=\|\vec w\|$. We follow the technique in works such as 
\cite{DBLP:conf/rr/CaliGP10,KR11jointacyc} to iteratively 
generate a doubly exponential number of objects that represent the cells of the tape of the DTM. 
Let $Q^M_{\C}$
be initialized with the following quads:
\begin{eqnarray}
&& c_0\colon(k_0,\texttt{rdf:type},R),c_0\colon(k_1,\texttt{rdf:type},R), \nonumber \\
&& c_0\colon(k_0,\texttt{rdf:type}, min_0), c_0\colon(k_1,\texttt{rdf:type},  
 max_0),  c_0\colon(k_0,succ_0,k_1) \nonumber
\end{eqnarray}
Now for each pair of elements of type $R$ in $c_i$, a skolem blank-node is generated
in $c_{i+1}$, and hence follows the recurrence relation $r(m+1)=[r(m)]^2$, with seed $r(1)=2$, which after $n$ iterations
yields $2^{2^n}$. In this way, a doubly exponential long chain of elements is created in 
$c_n$ using the following set of rules:
\begin{eqnarray}
&& c_i\colon(x_0, \texttt{rdf:type},R),c_i\colon(x_1,\texttt{rdf:type},R) \rightarrow \nonumber \\
&& \exists y \ c_{i+1}\colon(x_0,x_1,y), 
c_{i+1}\colon(y,\texttt{rdf:type},R) \nonumber 
\end{eqnarray}
The combination of minimal element with the minimal element (elements of type $min_i$) in $c_i$
create the minimal element in $c_{i+1}$, and similarly the combination of maximal element with the 
maximal element (elements of type $max_i$) in $c_i$
create the maximal element of $c_{i+1}$
\begin{eqnarray}
c_{i+1}\colon(x_0,x_0,x_1), c_i\colon(x_0,\texttt{rdf:type},min_i) \rightarrow 
c_{i+1}\colon(x_1,\texttt{rdf:type},min_{i+1}) \nonumber \\
c_{i+1}\colon(x_0,x_0,x_1), c_i\colon(x_0,\texttt{rdf:type},max_i) \rightarrow 
c_{i+1}\colon(x_1,\texttt{rdf:type},max_{i+1}) \nonumber 
\end{eqnarray}
Successor relation $succ_{i+1}$ is created in $c_{i+1}$ using the following set of rules, using the well-known,
integer counting technique:
\begin{eqnarray}
&& c_i\colon(x_1,succ_i,x_2), c_{i+1}\colon(x_0,x_1,x_3), 
 c_{i+1}\colon(x_0,x_2,x_4) \rightarrow c_{i+1}\colon(x_3, succ_{i+1},x_4) \nonumber \\
&& \nonumber \\
&& c_i\colon(x_1,succ_i,x_2), c_{i+1}\colon(x_1,x_3,x_5),c_{i+1}\colon(x_2,x_4,x_6), c_i\colon(x_3, \texttt{rdf:type},max_i), \nonumber \\
&&  c_i\colon(x_4, \texttt{rdf:type},  
   min_i) \rightarrow c_{i+1}\colon(x_5, succ_{i+1},x_6) \nonumber 
\end{eqnarray}
Each of the above set rules are instantiated for $0\leq i <n$, and in this way after $n$ generating dChase iterations, 
$c_n$ has doubly exponential number of elements of type $R$, that are ordered linearly using the relation $succ_n$.
By virtue of the first rule below, each of the objects representing the cells of the DTM are 
linearly ordered by the relation $succ$. Also the transitive closure of $succ$ is defined as the relation
relation $succt$
\begin{eqnarray}
&& c_n\colon(x_0, succ_n, x_1) \rightarrow c_n\colon(x_0, succ, x_1) \nonumber \\
&& c_n\colon(x_0, succ, x_1) \rightarrow c_n\colon(x_0, succt,x_1) \nonumber \\
&& c_n\colon(x_0, succt, x_1), c_n\colon(x_1,succt,x_2) 
 \rightarrow c_n\colon(x_0, succt, x_2) \nonumber
\end{eqnarray}
Also using a similar construction, we could create a linearly ordered chain 
of double exponential number of objects in $c_n$ that 
represents configurations of $M$, whose minimal element is of type $conInit$, and the linear order relation being $conSucc$.

 Various triple patterns that are used to encode the possible configurations, runs and their relations in $M$ are:
\begin{description}
 \item[$(x_0,head,x_1)$] denotes the fact that in configuration $x_0$, the head of the DTM is at cell $x_1$.
 \item[$(x_0,state, x_1)$] denotes the fact that in configuration $x_0$, the DTM is in state $x_1$.
 \item[$(x_0,\sigma, x_1)$] where $\sigma \in \Sigma$, denotes the fact that in configuration
 $x_0$, the cell $x_1$ contains $\sigma$.
 \item[$(x_0,succ, x_1)$] denotes the linear order between cells of the tape.
 \item[$(x_0,succt, x_1)$] denotes the transitive closure of $succ$.
 \item[$(x_0,conSucc,x_1)$] to denote the fact that $x_1$ is a successor configuration of $x_0$.
 \item$(x_0,\texttt{rdf:type},Accept)$ denotes the fact that the configuration $x_0$ is an accepting configuration.
\end{description}
Since in our construction, each $\sigma \in \Sigma$ is represented as relation, we could constrain
that no two alphabets $\sigma \neq \sigma'$ are on the same cell  using the 
following axiom:
\begin{eqnarray}
  c_n\colon(z_1 ,\sigma, z_2), c_n\colon(z_1, \sigma', z_2) \rightarrow  \nonumber
\end{eqnarray}
for each $\sigma \neq \sigma' \in \Sigma$. Note that the above BR has an empty head, is equivalent to asserting the negation of its body. 

\paragraph{Initialization}
Suppose the initial configuration is $q_0\vec w\Box$, where $\vec w=\sigma_0...\sigma_{n-1}$, then we 
enforce this using the following BRs in our quad-system $QS^M_{\C}$ as:
\begin{eqnarray}
&& c_n\colon(x_0,\texttt{rdf:type}, conInit), c_n\colon(x_1,\texttt{rdf:type}, min_n) \rightarrow c_n\colon(x_0,head,x_1),\nonumber \\ 
&& c_n\colon(x_0,state,q_0) \nonumber \\ 
&& c_n\colon(x_0,\texttt{rdf:type}, min_n) \wedge \bigwedge_{i=0}^{n-1} c_n\colon(x_i, succ,x_{i+1}) \wedge c_n\colon(x_j, \texttt{rdf:type}, \nonumber \\
&& conInit)  \rightarrow \bigwedge_{i=0}^{n-1} c_n\colon(x_j, \sigma_i, x_i) \wedge c_n\colon(x_j,\Box, x_n) \nonumber \\
&& c_n\colon (x_j, \texttt{rdf:type}, conInit), c_n\colon(x_j,\Box,x_0), c_n\colon(x_0, succt,x_1) \rightarrow 
 c_n\colon(x_j, \Box, x_1) \nonumber
\end{eqnarray}
 The last BR copies the $\Box$ to every succeeding cell in the initial configuration. 
\paragraph{Transitions}
 For every left transition $\Delta (q,\sigma)= (q_j,\sigma',-1)$, the following BR:
 \begin{eqnarray}
 && c_n\colon(x_0, head,x_i), c_n\colon(x_0, \sigma, x_i), c_n\colon(x_0, state, q), c_n\colon(x_j, succ, x_i),  c_n\colon(x_0, \nonumber \\
 &&  conSucc, x_1) \rightarrow c_n\colon(x_1, head, x_j), c_n\colon(x_1,\sigma',x_i), c_n\colon(x_1, state, q_j) \nonumber 
 \end{eqnarray}
 For every right transition $\Delta (q,\sigma)=(q_j,\sigma',+1) $, the following BR:
 \begin{eqnarray}
 && c_n\colon(x_0, head,x_i), c_n\colon(x_0, \sigma, x_i), c_n\colon(x_0, state, q),  c_n\colon(x_i, succ, x_j), c_n\colon(x_0, \nonumber \\
 &&conSucc, x_1), \rightarrow  c_n\colon(x_1, head, x_j), c_n\colon(x_1,\sigma',x_i), c_n\colon(x_1, state, q_j) \nonumber 
 \end{eqnarray}

 \paragraph{Inertia}
 If in any configuration the head is at cell $i$ of the tape, then in every successor configuration, elements
 in preceding and following cells of $i$ in the tape are retained. The following two BRs ensures this:
 \begin{eqnarray}
  && c_n\colon (x_0, head, x_i), c_n\colon(x_0, conSucc, x_1), c_n\colon(x_j, succt, x_i), c_n\colon(x_0, \sigma, x_j)  \nonumber \\
  &&   \rightarrow c_n\colon(x_1, \sigma, x_j) \nonumber \\
 && c_n\colon(x_0, head, x_i), c_n\colon(x_0, conSucc, x_1), c_n\colon(x_i, succt, x_j),  c_n\colon(x_0, \sigma, x_j)  \nonumber \\
  && \rightarrow c_n\colon(x_1, \sigma, x_j) \nonumber
 \end{eqnarray}
 The rules above are instantiated for every $\sigma \in \Sigma$.
 \paragraph{Acceptance}
 An configuration whose state is $q_A$ is accepting:
 \begin{eqnarray}
  c_n\colon(x_0, state,q_A) \rightarrow c_n\colon(x_0, \texttt{rdf:type}, Accept) \nonumber
 \end{eqnarray}
If a configuration of accepting type is reached, then it can be back propagated to the initial configuration, using the following BR:
  \begin{eqnarray}
   c_n\colon(x_0, conSucc, x_1), c_n\colon(x_1, \texttt{rdf:type}, Accept) 
    \rightarrow c_n\colon(x_0, \texttt{rdf:type}, Accept) \nonumber
 \end{eqnarray}

Finally since $M$ accepts $\vec w$ iff the initial configuration is an accepting configuration, i.e.  
$QS^M_{\C} \models \exists y \ c_n\colon (y, \texttt{rdf:type}, conInit), c_n\colon(y, \texttt{rdf:type},Accept)$. 
Since there is no edge from any $c_j$ to $c_i$, for each $1 \leq i < j \leq n$, the context dependency graph for 
$QS^M_{\C}$ is acyclic, and hence $QS^M_{\C}$ is context acyclic. 
Since we reduced the word problem of 2EXPTIME DTM, which is a 2EXPTIME-hard problem, to CCQ entailment problem
over context acyclic quad-systems, it immediately follows that CCQ entailment problem over context acyclic quad-systems
is 2EXPTIME-hard. $\qed$

\end{document}